
\documentclass[conference,letterpaper]{IEEEtran}

\addtolength{\topmargin}{9mm}

%
%
\usepackage[utf8]{inputenc} 
\usepackage[T1]{fontenc}
\usepackage{url}
\usepackage{ifthen}
\usepackage{cite}
\usepackage{hyperref}
\usepackage[cmex10]{amsmath} 


\interdisplaylinepenalty=2500 
\usepackage{bm}
\usepackage{amssymb, amsthm, amsfonts, mathtools, marvosym}
\newtheorem{definition}{Definition}
\newtheorem{theorem}{Theorem}
\newtheorem{lemma}{Lemma}
\newtheorem{corollary}{Corollary}

\newtheorem{assumption}{Assumption}
\newtheorem{remark}{Remark}

\theoremstyle{plain}
\newtheorem{example}{Example}
\usepackage{color}
\usepackage{xcolor}
\usepackage{graphicx}
\usepackage{subcaption}
\usepackage{extarrows}

\newcommand{\part}[2]{\frac{\partial #1}{\partial #2}}

\newcommand{\BN}{\mathbb{N}}
\newcommand{\R}{\mathbb{R}}
\newcommand{\E}{\mathbb{E}}
\newcommand{\pr}{\mathbb{P}}

\DeclareMathOperator{\grad}{grad}
\DeclareMathOperator{\Hess}{Hess}
\newcommand{\norm}[1]{\left\Vert{#1}\right\Vert}
\newcommand{\HS}[1]{\norm{#1}_{\mathrm{HS}}}

\hyphenation{op-tical net-works semi-conduc-tor}

\begin{document}
\title{Convexity of Mutual Information\\
along the Fokker-Planck Flow} 

\author{%
 \IEEEauthorblockN{Jiayang Zou, Luyao Fan, Jiayang Gao, Jia Wang}
 \IEEEauthorblockA{Department of Electronic Engineering \\
   Shanghai Jiao Tong University\\
   Shanghai, China\\
                     Email: qiudao@sjtu.edu.cn, fanluyao@sjtu.edu.cn, gjy0515@sjtu.edu.cn, jiawang@sjtu.edu.cn}
                    }


\maketitle

\begin{abstract}
    We study the convexity of mutual information as a function of time along the Fokker-Planck flow. The results are generalizations of that along heat flow and Ornstein-Ulenbeck flow, which were established by A. Wibisono and V. Jog. We prove the existence and uniqueness of the classical solutions to a class of Fokker-Planck equations and then we obtain the second derivative of mutual information along the Fokker-Planck equation. 
    If the initial distribution is sufficiently
   strongly log-concave compared to the steady state, then mutual information always preserves convexity under suitable conditions. In particular, if there exists some time point at which the distribution is sufficiently strongly log-concave, then mutual information will preserve convexity after that time.
  \end{abstract}
  
  \section{Introduction}
  
  The Fokker-Planck equation (FPE) is an important partial differential equation to describe the evolution of probability density of stochastic processes, especially in statistical physics, machine learning and other fields. It can be understood and solved from different angles and methods.
  
  The Fokker-Planck equation can be regarded as the evolution equation of probability density of stochastic processes related to the It\^o stochastic differential equation  (SDE)\cite{jordan1998variational}. This means that for Markov processes satisfying the It\^o stochastic differential equation, the evolution of conditional probability density with time in a given initial state can be described by the Fokker-Planck equation\cite{sharma2010fokker}. Therefore, it is apt to designate this phenomenon as the Fokker-Planck flow (FP flow), highlighting the dynamic progression and transformation of the probability density in accordance with the Fokker-Planck equation. In addition, the Fokker-Planck equation can also be used to analyze the nonlinear system 
  for a variety of problems like demodulating the phase-locked frequency \cite{mayfield1973sequence}. 
  
  In the field of statistical physics and machine learning, the importance of the  Fokker-Planck equation lies in its ability to describe the continuity equation of density evolution, in which the change of density is completely determined by a time-varying velocity field, which in turn depends on the current density function\cite{shen2022self}. This self-consistency makes Fokker-Planck equations the basis of designing latent functions and neural network parametric models, and then generates the whole density trajectory to approximate the solution of Fokker-Planck equations \cite{shen2022self}. In specific application fields, such as plasma physics and nonlinear filtering, the Fokker-Planck equation is also used to describe specific physical phenomena, for example, the interaction between RF waves and plasma, and the probability density of states given by observation results\cite{yau2004computation}. These applications show the flexibility and effectiveness of Fokker-Planck equations in solving practical physical problems.
  
  In recent years, A. Wibisono and V. Jog have proposed the utilization of the FPE for channel modeling, and extended the channel originally modeled by discrete Markov chains to be modeled by continuous Markov chains \cite{wibisono2017information}. This paradigm shift conceptualizes the channel as a dynamically evolving system over time, so that the channel can be analyzed by SDE methods. Particularly, the evolution of the probability density function along the trajectory defined by an SDE can be characterized by the FPE. Consequently, channels that adhere to this description are referred to as Fokker-Planck channels. Furthermore, Wibisono and Jog have extended the I-MMSE (Information-Minimum Mean Square Error) identity\cite{guo2005mutual,wu2011derivative,guo2011estimation}, or equivalently, De Bruijn's identity\cite{stam1959some,emamirad2023bruijn}, originally formulated for the scenario of an additive white Gaussian noise channel, which delineates the progression of Brownian motion and can be represented by a straightforward SDE, to the more general Fokker-Planck channel.
  
  In the conventional information theory, it is well-known that the mutual information gradually decreases by data processing inequality in the process of transmission on Markov chains\cite{cover2006elements}. And it is a convex function for a fixed initial probability density function. While in the perspective of Fokker-Planck channels, the mutual information is a function over time. We have known that the mutual information is decreasing along the Fokker-Planck process, which means the first time derivative is negative\cite{wibisono2017information}. And it has been proved that if the initial distribution is log-concave, then mutual information is always a convex function of time along the heat flow, which is a special case of the Fokker-Planck flow\cite{wibisono2018convexity}. 
  Moreover, if the initial distribution is sufficiently
  strongly log-concave compared to the target Gaussian measure, then mutual information is always a convex function of time along the Ornstein-Uhlenbeck (OU) flow\cite{wibisono2018convexityOU}. However, there is a lack of relevant research on the properties of mutual information along the FP flow, even the simplest linear FPE, since it has no explicit solution like the heat flow or the OU flow. Hence we need to explore new methods that circumvent the dependence on explicit solutions to study the convexity of mutual information.
  
  In this paper we study the convexity of mutual information along the FP flow to generalize the conclusions in cases of the heat flow, with some suitable assumptions. It is known that the properties of the FP flow are closely linked to entropy. By optimal transport theory, the Fokker-Planck flow is the gradient flow in the Wasserstein space (the space of probability distributions with the Wasserstein metric structure) associated with the energy functional taken the form by relative entropy with respect to the steady distribution\cite{villani2021topics,villani2009optimal,figalli2021invitation,santambrogio2015optimal,ambrosio2005gradient}. 
  
  Our first main result is that we prove the uniqueness and existence of the classical solution to the FP flow and obtain the second derivative of mutual information along the FP flow, contributing to studying the convexity of mutual information over time. Then by the theory of spatial log-concavity of solutions to parabolic systems\cite{ishige2010new,ishige2020spatial,ishige2020logconcavity}, we prove that if the initial distribution is sufficiently strongly log-concave compared to the steady state, and the initial distribution $u_0$ is nonnegative and continuous, and the potential function $V(x)$ satisfies
  $\Delta V \leqslant \frac{1}{2}\Vert \nabla V\Vert^2$, 
  then mutual information always preserves convexity. In particular, if there exists some time point at which the distribution is sufficiently strongly log-concave compared to the steady state, then mutual information preserves convexity after this time under the same conditions. 
  
  The remainder of the paper is organized as follows: Section \ref{Existence and Uniqueness of Classical Solutions to FP Flow} reviews the fundamentals of the FPE and proves the existence and uniqueness of classical solutions to the FP flow with suitable assumptions, as the foundation of our work. Section \ref{Background of Log-concavity, Entropy and Mutual Versions} recapitulates some useful properties of log-concave functions and the log-convexity of solutions to parabolic systems. It also reviews the definitions of entropy, Fisher information, and their mutual versions. Section \ref{Convexity of mutual information} presents the main results and proofs of the convexity of mutual information along the FP flow. We conclude with Section \ref{sec:conclusion}, which includes a discussion of open problems and future directions. 
  
  Detailed proofs of results in Section \ref{Existence and Uniqueness of Classical Solutions to FP Flow} and \ref{Convexity of mutual information} are provided in Appendix.
  \section{Existence and Uniqueness of Classical Solutions to the FP Flow}
  \label{Existence and Uniqueness of Classical Solutions to FP Flow}
  
  Consider the general Fokker-Planck channel that outputs a real-valued stochastic process $(X_t)_{t\geqslant 0}$ in 
  $\mathbb{R}^N$ with respect to the SDE
  \begin{equation}
      dX_t = a(X_t,t)dt+\sigma(X_t,t)dW_t
  \end{equation}
  where $(W_t)_{t\geqslant 0}$ is the standard Brownian motion in $\R^N$, $a(x,t)$ is the drift coefficient and $\sigma(x,t)\succ0$ is the diffusion coefficient, where $\sigma(x,t)\in \R^{N\times N}$. Note that we define $A\succ B$ if and only if $A-B$ is positive definite, and define $A\succeq B$ if and only if $A-B$ is positive semi-definite. 
  
  We denote the density function of $X_t$ for a fixed time $t\geqslant 0$ by $\mu(x,t)$ or $\mu_t(x)$ over space $x\in\R^N$, then the density function satisfies the Fokker-Planck equation
  \begin{equation}
      \begin{aligned}
          \partial_t\mu=\dfrac{\partial \mu}{\partial t}=\Delta (D\mu)-\nabla\cdot(a\mu ).
      \end{aligned}
  \end{equation}
  Here $D(x,t)=\sigma(x,t)^\top\sigma(x,t)/2$, $\nabla\cdot = \sum_{i=1}^{n} \part{}{x_i}$ is the divergence and $\Delta= \sum_{i=1}^{n}\frac{\partial^2}{\partial x_i^2}$ is the Laplacian operator.
  
  The choice $a\equiv 0$ and $\sigma \equiv \sqrt{2}$ generates the heat flow, i.e. the Gaussian channel. Also, if we choose $a(x,t)=-\alpha x, \alpha>0$ and $\sigma \equiv 1$, this equation generates the OU flow. In this paper, we consider the case where $a(x,t)$ is independent of $t$ and $\sigma\equiv\sqrt{2}$. Define the potential function $V(x)$ whose negative gradient is $a(x)$, that is,
  $\nabla V = -a(x)$
  with the potential $V=V(x)$ satisfying $e^{-V}\in L^1(\mathbb{R}^N)$. Then we have the Fokker-Planck flow
  \begin{equation}\label{FPflow}
      \begin{aligned}
  \frac{\partial \mu}{\partial t}=\Delta \mu +\nabla \cdot \left( \mu \nabla V \right). 
      \end{aligned}
  \end{equation}
  This result requires that $a(x,t)$ and $\sigma(x,t)$
  satisfy appropriate regularity and growth
  conditions, such as smoothness and Lipschitz properties\cite{mackey2011time}. 
  
  If we take $\gamma_t = e^{\frac{V}{2}}\mu_t$, then the Fokker-Planck flow \eqref{FPflow} can be simplified by 
  \begin{equation}\label{schrodinger2}
  \frac{\partial \gamma}{\partial t}=\left( \frac{1}{2}\Delta V-\frac{1}{4}\left\| \nabla V \right\| ^2 \right) \gamma +\Delta \gamma 
  \end{equation}
  which is an (imaginary-time) Schr\"odinger equation. Let $c(x)=\frac{1}{4}\norm{\nabla V}^2-\frac{1}{2}\Delta V$. Throughout this paper we assume the following conditions:
  \begin{assumption}\label{Assumption for potential}
      $V(x)$ is nonnegative and smooth, and $c(x)$ is convex with a lower bound.
  \end{assumption}
  This assumption can hold if we consider the Ornstein-Uhlenbeck process.
  \begin{assumption}\label{Assumption for initial}
      $\gamma_0(x)$ is bounded, that is, $\mu_0(x)\lesssim e^{-\frac{V(x)}{2}}$. Note that $f\lesssim g$ means there exists a constant $C$ such that $f\leqslant Cg$.
  \end{assumption}
  
  \begin{remark}
  By considering $\Tilde{\gamma_t}:=\gamma_t e^{-\beta t}, \beta>\inf_{x\in \R^N} c(x)$, we have
  \begin{equation}
      \begin{aligned}
          \label{schrodinger3}
  \frac{\partial \Tilde{\gamma}}{\partial t}+\left( c(x)+\beta \right) \Tilde{\gamma} =\Delta \Tilde{\gamma}.
      \end{aligned}
  \end{equation}
  Since the existence and uniqueness of equations \eqref{schrodinger2} and \eqref{schrodinger3} are equivalent, and $\Tilde{\gamma}$ has the same log-concavity property as $\gamma$, we only need to consider the case where $c(x)$ is nonnegative.
  \end{remark}
  By large-time behaviors of Fokker-Planck equations\cite{bogachev2022fokker,risken1996fokker,arnold2008large}, we know that the Fokker-Planck flow \eqref{FPflow} has the unique normalized steady state $\mu_{\infty}=e^{-V}$, and the relative entropy between $\mu_t$ and $\mu_{\infty}$ converges to $0$ exponentially as $t\to \infty$\cite{jungel2016entropy}. Moreover, the $L_1$ norm of $\mu_t-\mu_{\infty}$ and the Wasserstein distance between them also converges to $0$ exponentially as $t\to \infty$\cite{villani2009optimal}.
  
  In this paper, we can assume that solutions to this Fokker-Planck flow and the Schr\"odinger equation \eqref{schrodinger2} exist and sufficiently smooth.
  Moreover, to facilitate the technique of integration by parts, it is further assumed that the solution is a rapidly decreasing function.
  While in some cases, the existence and uniqueness of classical solutions can be substantiated through the application of stochastic analysis methods, which is shown in Appendix \ref{appendix: Existence and Uniqueness to the Cauchy Problem}.
  \begin{theorem}\label{Thm:Existence and Uniqueness to the Cauchy Problem}
      If Assumption \ref{Assumption for potential} and \ref{Assumption for initial} hold, the Fokker-Planck flow \eqref{FPflow} in the entire space has the unique classical solution
       $$\mu(x,t)=
   e^{-\frac{V\left( x \right)}{2}}\E_{x/\sqrt{2}}\left[
          \gamma_0(\sqrt{2}W_{t})e^{-\int_{0}^{t}c(\sqrt{2}W_r)dr}
          \right]
  $$
  where $
   c\left( x \right) =\frac{1}{4}\left\| \nabla V \right\| ^2-\frac{1}{2}\Delta V
  , \gamma_0=e^{\frac{V}{2}}\mu_0$. Furthermore, $\mu(x,t)\in C^{1,\infty}((0,\infty)\times \R^N).$
  
  Then this solution can be expressed as the convolution of a kernel function with $\mu_0$
  \begin{equation*}
      \begin{aligned}
          \mu(x,t)&=e^{-\frac{V(x)}{2}}\int_{\R^N} K(x,y,t)\gamma_0(y) dy\\
          &=\int_{\R^N} K(x,y,t)e^{\frac{V(y)-V(x)}{2}}\mu_0(y) dy.
      \end{aligned}
  \end{equation*}
  
  \end{theorem}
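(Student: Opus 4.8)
The plan is to pass from the Fokker--Planck flow \eqref{FPflow} to the Schrödinger-type equation \eqref{schrodinger2} through the change of variables $\gamma_t = e^{V/2}\mu_t$, solve the latter by a Feynman--Kac representation, and then transfer smoothness and uniqueness back to $\mu$. By the substitution $\gamma_t\mapsto\gamma_t e^{-\beta t}$ of the Remark I may assume throughout that $c\ge 0$; this shift only contributes a factor $e^{\pm\beta t}$, which cancels, so the resulting formula for $\gamma$ (hence for $\mu$) still involves the original $c$. With $c\ge0$ equation \eqref{schrodinger2} becomes $\partial_t\gamma = \Delta\gamma - c\gamma$, whose underlying diffusion is Brownian motion with generator $\Delta$, i.e. $\sqrt{2}W_t$ for a standard Brownian motion $W$, started so that $\sqrt2 W_0 = x$, i.e. $W_0 = x/\sqrt2$. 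The candidate solution is therefore
\begin{equation*}
\gamma(x,t) = \E_{x/\sqrt{2}}\!\left[\gamma_0(\sqrt{2}W_t)\, e^{-\int_0^t c(\sqrt{2}W_r)\,dr}\right],
\end{equation*}
and since $\gamma_0$ is bounded (Assumption \ref{Assumption for initial}) and $c\ge0$ makes $e^{-\int_0^t c}\le 1$, this expectation is finite with $\|\gamma(\cdot,t)\|_\infty\le\|\gamma_0\|_\infty$; then $\mu = e^{-V/2}\gamma$ is the formula in the statement.

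Next I would verify that this expression really is a classical solution with the stated regularity. Because $c$ is smooth and $c\ge0$, $-\Delta + c$ is essentially self-adjoint and nonnegative on $C_c^\infty(\R^N)$, so $P_t := e^{t(\Delta - c)}$ is a positivity-preserving contraction semigroup whose integral kernel $K(x,y,t)$ is, for each $t>0$, smooth in $(x,y)$ and dominated by the free Gaussian kernel, $0 < K(x,y,t)\le (4\pi t)^{-N/2}e^{-|x-y|^2/(4t)}$; Feynman--Kac gives exactly this domination and identifies $\gamma(x,t) = \int_{\R^N} K(x,y,t)\gamma_0(y)\,dy$ with the expectation above. Differentiating under the integral, using the analogous Gaussian bounds for $\partial_x K$ and $\partial_t K$ together with the boundedness of $\gamma_0$, yields $\gamma\in C^{1,\infty}((0,\infty)\times\R^N)$ and shows $\gamma$ solves $\partial_t\gamma = \Delta\gamma - c\gamma$ pointwise --- equivalently, interior parabolic regularity applied to the bounded distributional solution $\gamma$ gives the same conclusion since $c$ is locally smooth (in fact one gets $C^\infty$ in both variables); when $\gamma_0$ is continuous the initial value is attained pointwise as $t\to0^+$. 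Since $V$ is smooth, $\mu = e^{-V/2}\gamma$ inherits the regularity and solves \eqref{FPflow}, and substituting $\gamma_0 = e^{V/2}\mu_0$ into $\gamma = \int K(x,y,t)\gamma_0(y)\,dy$ gives the stated convolution with kernel $K(x,y,t)e^{(V(y)-V(x))/2}$.

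For uniqueness I would work in the natural class of bounded $\gamma$ (equivalently $\mu\lesssim e^{-V/2}$, matching Assumption \ref{Assumption for initial}). If $\gamma_1,\gamma_2$ are two such classical solutions with the same initial datum, then $w = \gamma_1 - \gamma_2$ is a bounded classical solution of $\partial_t w = \Delta w - cw$ with $w(\cdot,0)=0$; since $c\ge0$, the function $v_\varepsilon(x,t) = \varepsilon(|x|^2 + 2Nt)$ is a nonnegative supersolution of $\partial_t - \Delta + c$, so comparing $\pm w$ with $v_\varepsilon$ on cylinders $B_R\times(0,T)$ and letting $R\to\infty$ and then $\varepsilon\to0$ (a Phragmén--Lindelöf / Tychonoff argument) forces $w\equiv 0$. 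Undoing the reduction $c\mapsto c+\beta$ then yields uniqueness for \eqref{FPflow}.

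I expect the regularity step to be the main obstacle: under the realistic hypothesis that $c$ is merely convex and bounded below it may be unbounded (growing like $\|\nabla V\|^2$), which rules out the standard parametrix construction of the fundamental solution of $\partial_t - \Delta + c$, forcing one to combine the probabilistic representation --- where $c\ge0$ keeps the exponential functional under control and supplies the crucial Gaussian domination --- with purely local PDE regularity, and to justify carefully the differentiation under the expectation (moment bounds for $\sqrt2 W_t$ together with $e^{-\int_0^t c}\le1$) needed to reach $C^{1,\infty}$. The remaining ingredients --- the algebra of the substitutions, the self-adjointness/semigroup facts, and the maximum principle --- are standard once $c\ge0$ has been arranged.
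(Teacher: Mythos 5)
Your proposal follows the same overall skeleton as the paper: pass to the Schr\"odinger form $\gamma=e^{V/2}\mu$, reduce to $c\geqslant 0$ via the exponential shift, represent the solution by the Feynman--Kac formula for Brownian motion run at speed $\sqrt{2}$, and read off the kernel representation. The differences are in the two technical sub-steps. For uniqueness, the paper's argument is the It\^o/martingale computation of Lemma \ref{stochastic analysis_FP flow}: any classical (bounded) solution makes $Z_s=u(2(t-s),2W_s)e^{-2\int_0^s c}$ a martingale and hence coincides with the Feynman--Kac expectation, so the representation and uniqueness come in one stroke; your Phragm\'en--Lindel\"of comparison with the supersolution $\varepsilon(|x|^2+2Nt)$ is a valid and arguably more transparent alternative, and it makes explicit the class (bounded $\gamma$, i.e.\ $\mu\lesssim e^{-V/2}$) in which uniqueness is asserted. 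For regularity, the paper proves continuity of the Feynman--Kac function directly (uniform integrability plus dominated convergence), then localizes on bounded $C^2$ domains, invokes Friedman's classical solvability there, and uses the strong Markov property at the exit time to identify the local solution with $v$, finally bootstrapping to $C^{1,\infty}$ through the Duhamel heat-kernel integral identity; your route via the semigroup kernel of $-\Delta+c$ is fine in outline, but the one claim that does not hold as stated for unbounded $c$ (e.g.\ quadratically growing, as in the OU case) is that $\partial_xK$ and $\partial_tK$ enjoy global Gaussian bounds "analogous" to the Trotter/Feynman--Kac domination of $K$ itself --- you correctly flag this, and your fallback (interior parabolic regularity/hypoellipticity applied to the bounded kernel-integral solution, with differentiation under the integral justified by the Gaussian domination of $K$ together with interior Schauder bounds on compact $x$-sets) is essentially the localization strategy the paper carries out via Friedman and the strong Markov property, so the gap is closable and, once closed, your argument even yields joint smoothness in $(t,x)$, slightly more than the stated $C^{1,\infty}$.
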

  \begin{proof}
  See Appendix \ref{appendix: Existence and Uniqueness to the Cauchy Problem}.
    \end{proof}
  Alternatively, we can also represent it by an operator semigroup\cite{li2011perelman}
  $$\mu(x,t)=Q_t \mu_0 := e^{-V}P_t(e^{V}\mu_0)$$
  where $P_t=e^{tL}$ is the heat semigroup generated by the Witten Laplacian $L=\Delta-\nabla V\cdot \nabla$.

  Indeed, if this FP flow only exists in a bounded domain, the solution can be also expressed as this way according to subsection 2.4.4 of \cite{avramidi2015heat}
  \begin{equation}
      \begin{aligned}
      \label{kernel function in a bounded domain}
          \mu(x,t)=\int_{\Omega} K_{\Omega}(x,y,t)e^{\frac{V(y)-V(x)}{2}}\mu_0(y) dy
      \end{aligned}
  \end{equation}
  where $K_{\Omega}$ is the kernel function corresponding to this equation in a bounded domain $\Omega$ with smooth boundary.
  
  \section{Log-convavity and Entropy Related Quantities}
  \label{Background of Log-concavity, Entropy and Mutual Versions}
  \subsection{Log-concavity of Solutions to Parabolic Systems}
  \label{Preservation of log-concavity}
  \begin{definition}
  A nonnegative function $u$ in $\R^N$ is said log-concave if 
  \begin{equation}
      u((1-\mu)x+\mu y)\geqslant u(x)^{1-\mu}u(y)^{\mu}
  \end{equation}
  for $\mu\in [0,1]$ and $x,y\in\R^N$ such that $u(x)u(y)>0$\cite{saumard2014log}
  \end{definition}
  This is equivalent to the following:
  \begin{enumerate}
      \item The set $S_{u}:=\{x\in\R^N : u(x)>0\}$ is convex and $\log u$ is concave in $S_u$.
      \item $u=e^{-\phi}$ where $\phi$ is a convex function.
  \end{enumerate}
  
  Log-concavity is a very useful variation of concavity and plays an important role in various fields such as PDEs, geometry, probability, statics and so on. Here we give some significant properties of log-concavity used in this paper:
  \begin{enumerate}
      \item The normal distribution and multivariate normal distributions are log-concave.
      \item Convolution preserves log-concavity. That is,  if $f$ and $g$ are log-concave, then 
      $$(f\ast g)(x)=\int f(x-y)g(y) dy$$
      is log-concave. 
      \item Log-concavity is preserved under convergence in distribution\cite{dharmadhikari1988unimodality}.
      
  \end{enumerate}

  By exploring the log-concavity of the Gauss kernel and the Prekópa-Leindler inequality, Brascamp and Lieb proved that log-concavity is preserved by the heat flow\cite{brascamp1976extensions}. 
  Furthermore, Lee and Vázquez proved that if the initial distribution is a bounded nonnegative function in $\R^N$ with compact support, then it is eventually log-concave along the heat flow\cite{lee2003geometrical}.
  
  Additionally, Ishige et al. introduced more generalized concepts of concavity, including spatial concavity\cite{ishige2020spatial} and F-concavity\cite{ishige2022characterization}, to examine the strongest or weakest concavity property preserved throughout the heat flow. They conducted a deep investigation into the log-convexity of solutions to parabolic systems.
  
  \begin{lemma}[]
  (Theorem 1.1 in \cite{ishige2020spatial})
  Let $\Omega$ be a bounded convex domain in $\R^N$ and $d_1,d_2>0$. Let $D:=\Omega\times (0,\infty)$,$(u,v)\in  C^{2,1}(D:\R^2)\cap C(\Bar{D}:\R^2) $ satisfy 
  \begin{equation*}
      \begin{aligned}
          \begin{cases}
              \partial_{t} u-d_1\Delta u +f(x,t,u,v,\nabla u) =0 & in\;\; D\\
              \partial_{t} v-d_2\Delta v +g(x,t,u,v,\nabla v) =0 & in \;\;D\\
              u,v\geqslant 0 & in \;\;D\\
              u(x,t)=v(x,t) =0 & on \;\; \partial\Omega\times [0,\infty)\\
              u(x,0)=u_0(x), v(x,0)=v_0(x) & in \;\;D
          \end{cases}
      \end{aligned}
  \end{equation*}
  where $f,g$ are nonnegative continuous functions in $D\times [0,\infty)^2\times \R^N
  $. Assume the following conditions:
  \begin{enumerate}
      \item The viscosity comparison principle holds for the above system.
      \item The functions 
      $$\mathfrak{f}_{t,\theta}(x,r,s):=e^{-r}f(x,t,e^r,e^s,e^r\theta)$$
      and
      $$\mathfrak{g}_{t,\theta}(x,r,s):=e^{-s}g(x,t,e^r,e^s,e^s\theta)$$
      are convex in $\Omega\times (0,+\infty)^2$ for every fixed $t>0$ and $\theta\in\R^N$.
  \end{enumerate}
  
  Then $\log u(\cdot,t)$ and $\log v(\cdot,t)$ are concave in $\Omega$ for every fixed $t>0$, provided that
  $\log u_0$ and $\log v_0$ are concave in $\Omega$.
  \end{lemma}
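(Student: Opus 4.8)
The plan is to prove the statement by a \emph{concavity maximum principle} in the spirit of Korevaar, Kennington and Alvarez--Lasry--Lions, adapted by Ishige and collaborators to parabolic systems. For a nonnegative solution component $w$, introduce the logarithmic concavity deficit
\[
C_w(x,y,\lambda,t):=\log w\big((1-\lambda)x+\lambda y,\,t\big)-(1-\lambda)\log w(x,t)-\lambda\log w(y,t),
\]
defined where the three evaluations of $w$ are positive, and set $c:=\min\{C_u,C_v\}$ on $\Omega\times\Omega\times[0,1]\times(0,\infty)$. The conclusion that $\log u(\cdot,t)$ and $\log v(\cdot,t)$ are concave is equivalent to $c\geq 0$ for all $t>0$, and the hypothesis on $u_0,v_0$ gives $c\geq 0$ at $t=0$. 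I would then argue by contradiction: if $c$ ever becomes negative, localize --- after the usual penalization in $t$, on the diagonal $x=y$ and on the edges $\lambda\in\{0,1\}$, and after an approximation that replaces $u,v$ by strictly positive comparison functions so that $c$ is genuinely defined and continuous on the relevant compact sets --- to a first time and an interior space point at which a negative value is attained, with $\lambda\in(0,1)$ and $x\neq y$.

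The second step is the logarithmic change of variables. Writing $U=\log u$, $W=\log v$, on the positivity set one has
\[
\partial_t U-d_1\Delta U-d_1|\nabla U|^2+\mathfrak{f}_{t,\nabla U}(x,U,W)=0,
\]
and symmetrically for $W$ with $d_2$ and $\mathfrak{g}$. Hypothesis (2) is exactly the assertion that, with the gradient direction frozen to a vector $\theta$, the zeroth-order term $\mathfrak{f}_{t,\theta}(x,U,W)$ is \emph{jointly convex} in $(x,U,W)$, which is what lets Jensen's inequality act with the correct sign at three collinear base points. The remaining nonlinearity is the quadratic term $-d_1|\nabla U|^2$ (and $-d_2|\nabla W|^2$), and the point is that it will \emph{cancel}: at a configuration realizing a negative minimum of $c$ carried by, say, $C_U$, the first-order conditions in $x$ and $y$ force $\nabla U(x)=\nabla U(y)=\nabla U\big((1-\lambda)x+\lambda y\big)=:\theta$, so $-d_1|\theta|^2+(1-\lambda)d_1|\theta|^2+\lambda d_1|\theta|^2=0$. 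This is precisely why the hypothesis freezes $\theta$ and only demands convexity in the remaining variables.

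The core step is then the maximum-principle computation at that configuration. Although $u,v$ are classical solutions, $c$ is merely continuous, so I would realize the minimum through a penalized doubling-of-variables scheme and invoke the viscosity comparison principle, hypothesis (1) (equivalently Ishii's lemma), to extract matching second-order sub/superjets of $U$ at $x$, $y$, and $z=(1-\lambda)x+\lambda y$. The spatial second-order contributions enter linearly through $\Delta U$ and, by collinearity of the three points, combine with the cancellation of the gradient terms and the joint convexity of $\mathfrak{f}_{t,\theta}$ --- applied at the triple $(x,U(x),W(x))$, $(y,U(y),W(y))$, $(z,U(z),W(z))$ --- to yield $\partial_t c\geq 0$, which is strictly incompatible with $c$ first dropping below $0$ at that time. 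Hence $c\geq 0$ for all $t$, which is the assertion.

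\emph{The main obstacle} I anticipate is the degeneracy at the lateral boundary: since $u,v$ vanish on $\partial\Omega\times[0,\infty)$, the functions $\log u,\log v$ blow down to $-\infty$ there, so one must show the deficit cannot be driven negative ``from the boundary'' --- this calls for a barrier/approximation argument on convex subdomains compactly contained in $\Omega$ together with interior positivity for positive times, or a limiting argument in the exponent. A secondary difficulty is the coupling: because $\mathfrak{f}$ depends on $W$ and $\mathfrak{g}$ on $U$, the two components cannot be treated separately, and the step ``$c=C_U$ at the minimum'' must be reconciled with the joint convexity of both $\mathfrak{f}$ and $\mathfrak{g}$ and with the comparison principle in order to close the contradiction --- which is also why hypothesis (1) is imposed, since $f,g$ are only assumed continuous rather than smooth.
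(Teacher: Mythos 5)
First, a point of comparison: the paper does not prove this lemma at all --- it is quoted verbatim as Theorem 1.1 of the cited reference (Ishige et al.), so the only benchmark is that reference's argument. There, the result is obtained by showing that the spatial log-concave envelope (concavification) of the transformed solution is a viscosity \emph{subsolution} of the system and then invoking the assumed viscosity comparison principle to conclude that the envelope lies below the solution, which forces log-concavity. Your plan is instead the classical Korevaar/Kennington concavity-maximum-principle route: track the concavity deficit $c=\min\{C_U,C_W\}$ in time and derive a contradiction at a first interior negative value. The two mechanisms are related (same change of variables, same cancellation of the $d_i|\nabla\cdot|^2$ terms via equal gradients, same use of the frozen-$\theta$ convexity through Jensen), but they are not the same proof, and your version leaves several steps genuinely open.

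The concrete gaps are these. (i) \emph{Role of hypothesis (1).} The viscosity comparison principle for the system is not ``equivalently Ishii's lemma,'' and it does not by itself deliver matching second-order jets of $U$ at the three collinear points $x$, $y$, $(1-\lambda)x+\lambda y$; the three-point doubling/penalization with Ishii's lemma is separate machinery that you assume rather than construct, and turning ``$\partial_t c\geq 0$ at a first touching'' into a contradiction needs a strict perturbation (e.g.\ an $\varepsilon/(T-t)$ or $\sigma t$ term) that is not set up. In the cited proof the comparison principle is used for exactly one thing --- comparing the solution with the constructed envelope subsolution --- which is why it appears as a hypothesis. (ii) \emph{The coupling.} At a contact configuration carried by $C_U$, Jensen for $\mathfrak{f}_{t,\theta}$ compares $\mathfrak{f}_{t,\theta}\bigl(z,U(z),W(z)\bigr)$ with its value at $\bigl((1-\lambda)x+\lambda y,\;(1-\lambda)U(x)+\lambda U(y),\;(1-\lambda)W(x)+\lambda W(y)\bigr)$; to get the needed sign you must control $U(z)$ and, crucially, $W(z)$ against these averages, which requires monotonicity of $\mathfrak{f}_{t,\theta}$ in $(r,s)$ or an argument exploiting that \emph{both} deficits sit at the common minimum level at the touching time. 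You flag this but do not resolve it; as stated, the zeroth-order term can have the wrong sign. (iii) \emph{Boundary and degeneracy.} Since $u,v$ vanish on $\partial\Omega\times[0,\infty)$ and are only assumed nonnegative in $D$, $\log u$, $\log v$ are $-\infty$ on the boundary and possibly inside; the exhaustion-by-compact-subdomains/barrier step you mention is essential and is exactly where the hard work lies, but it is only announced. In short, the strategy is plausible and close in spirit to the literature, but as written it is a programme with the decisive steps (jet extraction, coupling sign, boundary degeneracy) missing, whereas the cited theorem's envelope-plus-comparison argument is what the paper actually relies on.
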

  
  Consider an imaginary-time Schr\"odinger equation
  \begin{equation}\label{schrodinger}
  \frac{\partial \gamma}{\partial t}+c(x) \gamma -\Delta \gamma =0
  \end{equation}
  where $c(x)$ is nonnegative and sufficiently smooth.
  Based on Assumption \ref{Assumption for potential} and \ref{Assumption for initial}, it is established that the equation \eqref{schrodinger} possesses a unique classical solution. Furthermore, since the maximum principle holds to this parabolic PDEs—specifically, the classical comparison principle—it is demonstrated that the viscosity comparison principle also holds for equation \eqref{schrodinger}\cite{crandall1992user}.

  \begin{corollary}\label{Schr\"odinger equation_log concave}
      Let $\Omega$ be a bounded convex domain in $\R^N$ and let $u\in C^2(D)\cap C(\Bar{D}), D:=\Omega\times (0,\infty)$ satisfy
      \begin{equation*}
          \begin{cases}
              \partial_t u + c(x)u-\Delta u = 0 & in \;\; D\\
              u(x,t) = 0 & on \;\; \partial\Omega\times[0,\infty)\\
              u(x,0)=u_0(x) & in \;\; \Omega
          \end{cases}
      \end{equation*}
  where $u_0$ is a nonnegative continuous function on $\Bar{\Omega}$, and Assumption \ref{Assumption for potential} holds. Then $u_t$ is log-concave in $\Omega$ for every fixed $t>0$, if $u_0$ is log-concave in $\Omega$.

  \end{corollary}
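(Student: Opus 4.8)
The plan is to derive this corollary directly from the Lemma (Theorem~1.1 of \cite{ishige2020spatial}) by viewing the scalar equation as a ``diagonal'' $2\times2$ parabolic system. First I would reduce to the case $c\geqslant0$. Since $c$ is bounded below by Assumption \ref{Assumption for potential}, I pick $\beta>-\inf_{x}c(x)$ and set $\tilde u(x,t):=e^{-\beta t}u(x,t)$, which solves $\partial_t\tilde u-\Delta\tilde u+(c(x)+\beta)\tilde u=0$ in $D$ with the same homogeneous Dirichlet data and the same initial datum $u_0$; for each fixed $t$ the functions $\tilde u(\cdot,t)$ and $u(\cdot,t)$ differ only by the positive factor $e^{-\beta t}$ and hence have the same spatial log-concavity. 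So it suffices to prove the statement with $c$ replaced by the nonnegative, convex, smooth function $c+\beta$; abusing notation, I assume $c\geqslant0$ from now on.

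Next I would instantiate the Lemma with $v:=u$, $d_1=d_2=1$, and $f(x,t,p,q,\xi):=c(x)p$, $g(x,t,p,q,\xi):=c(x)q$, so that $(u,v)$ solves the required system with homogeneous Dirichlet data and initial data $(u_0,u_0)$. Here $c$ is continuous (indeed smooth), and $u\geqslant0$ in $D$ by the maximum principle, so $f$ and $g$ are nonnegative continuous functions on $D\times[0,\infty)^2\times\R^N$; interior parabolic (Schauder) regularity for this linear uniformly parabolic equation with smooth bounded zeroth-order coefficient upgrades the assumed $C^2$ regularity in $D$ to the $C^{2,1}$ regularity the Lemma demands. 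Then I would verify the two structural hypotheses. For (i), the viscosity comparison principle follows because the classical comparison principle holds for $\partial_t u-\Delta u+c(x)u=0$ with $c\geqslant0$, and classical sub/supersolutions are in particular viscosity sub/supersolutions \cite{crandall1992user}. For (ii), a direct computation gives
\[
\mathfrak f_{t,\theta}(x,r,s)=e^{-r}f(x,t,e^r,e^s,e^r\theta)=c(x)=\mathfrak g_{t,\theta}(x,r,s),
\]
which is independent of $(r,s)$ and convex in $x$ by Assumption \ref{Assumption for potential}, hence convex on $\Omega\times(0,\infty)^2$ for every $t>0$ and $\theta\in\R^N$. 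Applying the Lemma with $\log u_0$ concave then yields that $\log u(\cdot,t)$ is concave in $\Omega$ for every $t>0$, i.e.\ $u_t$ is log-concave; undoing the shift $u\mapsto\tilde u$ finishes the argument.

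The main obstacle I anticipate is not any single computation but the careful matching of the Lemma's hypotheses to the present setting: one must argue that a classical ($C^2$) solution of this parabolic equation is automatically $C^{2,1}$ in $D$, that the sign condition $f,g\geqslant0$ is genuinely available --- which is precisely why the preliminary shift to $c+\beta\geqslant0$ is needed, given that Assumption \ref{Assumption for potential} only provides a lower bound for $c$ --- and that the scalar problem legitimately embeds as a diagonal system for which the comparison principle and the convexity conditions reduce to those of the single equation. Once these points are settled, the conclusion follows verbatim from the Lemma.
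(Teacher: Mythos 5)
Your proposal is correct and follows essentially the same route as the paper: the exponential shift $\tilde u=e^{-\beta t}u$ to reduce to nonnegative $c$ (the paper's Remark, where your choice $\beta>-\inf_x c$ is in fact the correct sign), the reduction of the viscosity comparison principle to the classical one via \cite{crandall1992user}, and the application of the Ishige et al.\ lemma with $u=v$, $f=g=c(x)u$, so that $\mathfrak f_{t,\theta}=\mathfrak g_{t,\theta}=c(x)$ is convex by Assumption \ref{Assumption for potential}. Your additional remarks on upgrading $C^2$ to $C^{2,1}$ regularity and on $u\geqslant 0$ via the maximum principle fill in details the paper leaves implicit.
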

  
  \subsection{Fundamental Quantities and Mutual Versions}
  \label{Fundamental quantities and mutual versions}
  Let $X$ be a random variable in $\R^N$ and its probability density function $\mu$ is smooth and positive. 
  \begin{definition}\cite{wibisono2018convexity}
      The (differential) entropy of $X\sim \mu$ is 
      $$H(X) = -\int_{\R^N}\mu(x)\log \mu(x) dx. $$
       The Fisher information of $X\sim \mu$ is 
      $$J(X) = \int_{\R^N}\mu(x)\Vert \nabla\log\mu(x)\Vert^2 dx= \int_{\R^N}
      \dfrac{\Vert \nabla\mu(x)\Vert^2}{\mu(x)}
      dx.$$
      The second-order Fisher information of $X\sim \mu$ is
      $$K(X)=\int_{\R^N}\mu(x)\norm{\nabla^2 \log\mu(x)}^2_{\mathrm{HS}} dx.$$
      Here $\HS{A}^2 =\sum_{i,j=1}^{N} A_{ij}^2= \sum_{i=1}^{N}\lambda_i(A)^2$ is the Hilbert-Schmidt (or Frobenius) norm of a symmetric matrix $A=(A_{ij})\in\R^{N\times N}$ with eigenvalues $\lambda_i(A)\in\R$ and $\nabla^2 u(x) :=\left(\dfrac{\partial^2 u}{\partial x_i \partial x_j}\right) \in \R^{N\times N}$. Furthermore we can define the Hilbert-Schmidt inner product for two matrices with identical size
      $$\left<A,B\right>_{HS} := \sum_{i,j=1}^{N} A_{ij} B_{ij}.$$
  \end{definition}
  
  In optimal transport and gradient flow theory, we generally study the entropy and the Fisher information over a reference probability measure, denote by $\nu$ on $\R^N$.
  
  \begin{definition}\cite{wibisono2018convexityOU}
      The relative entropy of a probability measure $\mu$ with 
      respect to $\nu$ is 
      $$H_{\nu}(\mu) = \int_{\R^N}\mu\log \dfrac{\mu}{\nu} dx. $$
  This is also known as the Kullback-Leibler (KL) divergence.
      The relative Fisher information of $\mu$ with 
      respect to $\nu$ is 
      $$J_{\nu}(\mu) = \int_{\R^N}\mu 
      \norm{\nabla\log\dfrac{\mu}{\nu}}^2 dx.
      $$
      The relative second-order Fisher information of $\mu$ with 
      respect to $\nu$ is
      $$K_{\nu}(\mu)=\int_{\R^N}\mu \HS{\nabla^2 \log\dfrac{\mu}{\nu}}^2 dx.$$
  \end{definition}
  
  \begin{definition}\cite{wibisono2018convexityOU}
      Given a functional $F(Y)\equiv F(\mu_Y)$ of a random variable $Y\sim \mu_Y$, we can define its mutual version $F(X;Y)$ for a joint random variable $(X,Y)\sim \mu_{XY}$ by
      \begin{equation}
          F(X;Y) = F(Y|X)-F(Y)
      \end{equation}
  where $F(Y|X)= \int_{\R^N} \mu_{X}(x) F(\mu_{Y|X}(\cdot|x)) dx$ is the expectation of $F$ on the conditional random variables $Y|\{X=x\}\sim \mu_{Y|X}(\cdot |x)$, averaged over $X\sim \mu_{X}$. 
  \end{definition}
  
  Note that the mutual version exclusively captures the nonlinear component, so two functionals that differ solely by a linear function will possess identical mutual versions.
  \begin{definition}
  The mutual information is equal to the mutual version of negative entropy $-H(\mu)$ or the relative entropy $H_{\nu}(\mu)$, which is defined by
  $$I(X;Y)=H(Y)-H(Y|X)=H_{\nu}(X;Y).$$
  The mutual relative Fisher information is defined by
  $$J_{\nu}(X;Y)=J_{\nu}(Y|X)-J_{\nu}(Y)$$
  The mutual relative second-order Fisher information is defined by
  $$K_{\nu}(X;Y)=K_{\nu}(Y|X)-K_{\nu}(Y).$$
  \end{definition}
  Note that $J_{\nu}(X;Y)$ is equal to the backward Fisher information (or the statistical Fisher information) $\Phi(X|Y)$ that is defined by \cite{wibisono2017information}:
  $$\Phi(X|Y)=\int_{\R^N\times\R^N} \mu_{XY}\norm{\nabla_y \log \mu_{X|Y}}^2 dx dy.$$
  Hence $J_{\nu}(X;Y)$ is always nonnegative.
  
  
  \section{Convexity of Mutual Information}\label{Convexity of mutual information}
  \subsection{Derivatives of Mutual Information along the FP Flow}
  
  In this subsection, we apply definitions of fundamental quantities to the joint random variable $(X,Y)=(X_0, X_t)$, where $X_0\sim \mu_{X_0}$ represents the initial distribution, and $X_t\sim \mu_{X_t}$ signifies the distribution at a given time $t\geqslant 0$ along the FP flow originating from $\mu_{X_0}$. For the purpose of notation simplification, $\mu_{X_0}$ and $\mu_{X_t}$ are respectively abbreviated as $\mu_0$ and $\mu_t$. Furthermore, the distribution of the joint random variable $(X_0,X_t)$ is denoted by $\mu_{0t}$ or $\mu_{t0}$. Additionally, the distribution of the conditional random variable $X_t|\{X_0=x_0\}$ is represented by $\mu_{t|0}(\cdot|x_0)$.
  \begin{lemma}\label{derivatives of entropy}
  Along the FP flow for the reference measure $\nu =\mu_{\infty}= e^{-V}$, we have
  \begin{equation}
      \begin{aligned}
  \frac{d}{dt}H_{\nu}\left( \mu _t \right)&=-J_{\nu}\left( \mu _t \right) =\int_{\mathbb{R} ^N}{\mu _t}\left\| \nabla \log \frac{\mu _t}{\nu} \right\| ^2dx\\
   \frac{d^2}{dt^2}H_{\nu}\left( \mu _t \right)&=2K_{\nu}\left( \mu _t \right) +2G_{\nu}(\mu_t)\\
  \frac{d^2}{dt^2}H_{\nu}\left( \mu _{t|0} \right)&=2K_{\nu}\left( \mu _{t|0}  \right) +2\int_{\mathbb{R}^{n}}\mu_0 G_{\nu}(\mu_{t|0}) dx_0
      \end{aligned}
  \end{equation}
  where 
  $$\displaystyle G_{\nu}(\mu)=
  \int_{\mathbb{R} ^N}{\mu}\left< \left( \nabla ^2V \right) \nabla \log \frac{\mu}{\nu}, \nabla \log \frac{\mu}{\nu} \right> dx
  $$
  and note that the variables of $\nu$ and $V$ is represented by $x_t$. Here $\left<\cdot,\cdot\right>$ denotes the inner product.
  \end{lemma}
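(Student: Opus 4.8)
The plan is to differentiate $H_\nu(\mu_t)$ twice and collapse the result via the Bochner identity for the Witten Laplacian $L=\Delta-\nabla V\cdot\nabla$ introduced above, working throughout relative to the steady state $\nu=\mu_\infty=e^{-V}$ so that no explicit solution of \eqref{FPflow} is needed; this is the intrinsic counterpart of the De Bruijn-type computations of Wibisono and Jog. The first point is that, since $\nabla\log\nu=-\nabla V$, equation \eqref{FPflow} is the continuity equation $\partial_t\mu_t=\nabla\cdot(\mu_t\,\nabla\log(\mu_t/\nu))$. Writing $h_t:=\log(\mu_t/\nu)$ for the relative score and using $\int\partial_t\mu_t\,dx=0$ to discard the contribution $\int\mu_t\,\partial_t h_t\,dx=\int\partial_t\mu_t\,dx$, a single integration by parts yields $\frac{d}{dt}H_\nu(\mu_t)=\int(\partial_t\mu_t)\,h_t\,dx=-\int\mu_t\norm{\nabla h_t}^2\,dx=-J_\nu(\mu_t)$, all boundary terms at infinity vanishing by the assumed rapid decay of the solution.

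For the second derivative I would use $\frac{d^2}{dt^2}H_\nu(\mu_t)=-\frac{d}{dt}J_\nu(\mu_t)$ and differentiate $J_\nu(\mu_t)=\int\mu_t\norm{\nabla h_t}^2\,dx$, getting $\int(\partial_t\mu_t)\norm{\nabla h_t}^2\,dx+2\int\mu_t\langle\nabla h_t,\nabla(\partial_t h_t)\rangle\,dx$. The crucial intermediate fact, obtained by substituting \eqref{FPflow} into $\partial_t h_t=(\partial_t\mu_t)/\mu_t$ and expanding $\Delta\mu_t/\mu_t$, is the viscous Hamilton--Jacobi type relation $\partial_t h_t=L h_t+\norm{\nabla h_t}^2$. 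Substituting this, and rewriting the first term via integration by parts against $\partial_t\mu_t=\nabla\cdot(\mu_t\nabla h_t)$ as $\int(\partial_t\mu_t)\norm{\nabla h_t}^2\,dx=-\int\mu_t\langle\nabla h_t,\nabla\norm{\nabla h_t}^2\rangle\,dx$, the terms containing $\nabla\norm{\nabla h_t}^2$ combine (using $\nabla\norm{\nabla h_t}^2=2(\nabla^2 h_t)\nabla h_t$) and leave
\[
\frac{d}{dt}J_\nu(\mu_t)=2\int\mu_t\langle(\nabla^2 h_t)\nabla h_t,\nabla h_t\rangle\,dx+2\int\mu_t\langle\nabla h_t,\nabla(L h_t)\rangle\,dx .
\]

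To dispose of these two terms I would invoke the pointwise Bochner identity for $L$ in $\R^N$,
\[
\tfrac12\,L\norm{\nabla h}^2=\HS{\nabla^2 h}^2+\langle\nabla h,\nabla(Lh)\rangle+\langle(\nabla^2 V)\nabla h,\nabla h\rangle ,
\]
whose last term is precisely the Bakry--\'Emery curvature $\nabla^2 V$ acting in flat space. Multiplying by $\mu_t$ and integrating, the left-hand side becomes $\tfrac12\int(\partial_t\mu_t)\norm{\nabla h_t}^2\,dx$, since the Fokker--Planck operator $\mu\mapsto\Delta\mu+\nabla\cdot(\mu\nabla V)$ is the Lebesgue formal adjoint of $L$ (so $\int\mu_t\,Lf\,dx=\int(\partial_t\mu_t)f\,dx$), and this equals $-\int\mu_t\langle(\nabla^2 h_t)\nabla h_t,\nabla h_t\rangle\,dx$ by the integration by parts already used. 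Solving the resulting relation for $\int\mu_t\langle\nabla h_t,\nabla(L h_t)\rangle\,dx$ and inserting it into the display above, the Hessian-cubic terms cancel and one is left with $\frac{d}{dt}J_\nu(\mu_t)=-2\int\mu_t\HS{\nabla^2 h_t}^2\,dx-2\int\mu_t\langle(\nabla^2 V)\nabla h_t,\nabla h_t\rangle\,dx=-2K_\nu(\mu_t)-2G_\nu(\mu_t)$, because $\nabla^2\log(\mu_t/\nu)=\nabla^2 h_t$; hence $\frac{d^2}{dt^2}H_\nu(\mu_t)=2K_\nu(\mu_t)+2G_\nu(\mu_t)$. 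The conditional identity then follows by applying this verbatim to the curve $t\mapsto\mu_{t|0}(\cdot|x_0)$, which for each fixed $x_0$ solves the same flow in $x_t$ and is smooth for $t>0$ by Theorem \ref{Thm:Existence and Uniqueness to the Cauchy Problem}, and averaging against $\mu_0(x_0)\,dx_0$, which by the definition of the conditional versions gives $\frac{d^2}{dt^2}H_\nu(\mu_{t|0})=2K_\nu(\mu_{t|0})+2\int\mu_0 G_\nu(\mu_{t|0})\,dx_0$.

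The algebra above is routine; the step I expect to demand the most care is the analytic justification — differentiation under the integral sign and, especially, the vanishing of every boundary term at infinity in the repeated integrations by parts. This rests on the standing assumption that $\mu_t$ is smooth and rapidly decreasing, together with a mild growth control of $\nabla^2 V$ against the Gaussian-type decay of $\mu_t$ so that $G_\nu(\mu_t)$ is finite; both can be extracted from the kernel and smoothing estimates of Theorem \ref{Thm:Existence and Uniqueness to the Cauchy Problem}. For the conditional statement one additionally has to restrict to $t>0$, since $\mu_{t|0}(\cdot|x_0)$ degenerates to a Dirac mass as $t\downarrow0$.
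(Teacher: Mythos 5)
Your proof is correct, but it takes a genuinely different route from the paper. The paper treats the identities abstractly: it views \eqref{FPflow} as the Wasserstein gradient flow $\dot{\mu}=-\grad_{\mu}H_{\nu}$, writes $\frac{d}{dt}H_{\nu}=-\norm{\grad_{\mu}H_{\nu}}^2$ and $\frac{d^2}{dt^2}H_{\nu}=2(\Hess_{\mu}H_{\nu})(\grad_{\mu}H_{\nu})$, and then quotes Villani's Formula 15.7/Theorem 24.2 to identify the Hessian term as $2K_{\nu}+2G_{\nu}$; the conditional identity is obtained, as in your last step, by specializing to a point mass at $x_0$ and averaging over $\mu_0$. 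You instead carry out the underlying computation directly: the continuity-equation form $\partial_t\mu_t=\nabla\cdot(\mu_t\nabla h_t)$ with $h_t=\log(\mu_t/\nu)$, the relation $\partial_t h_t=Lh_t+\norm{\nabla h_t}^2$, and the Bochner identity $\tfrac12 L\norm{\nabla h}^2=\HS{\nabla^2 h}^2+\left<\nabla h,\nabla(Lh)\right>+\left<(\nabla^2V)\nabla h,\nabla h\right>$ for the Witten Laplacian, integrated against $\mu_t$ using that the Fokker--Planck operator is the Lebesgue adjoint of $L$ — this is exactly the Bakry--\'Emery $\Gamma_2$ computation that Villani's theorem packages, so the cancellation of the cubic Hessian terms you observe is the expected one and your algebra checks out. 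What the paper's route buys is brevity and a clean conceptual statement at the price of invoking heavy optimal-transport machinery; what yours buys is self-containedness and an explicit accounting of where the standing assumptions (smoothness, rapid decay so that all boundary terms in the repeated integrations by parts vanish, integrability of the $\nabla^2V$ term) actually enter — the paper relegates these to its blanket regularity assumptions in Section \ref{Existence and Uniqueness of Classical Solutions to FP Flow}. One incidental remark: your computation confirms that the first displayed identity in the lemma should read $\frac{d}{dt}H_{\nu}(\mu_t)=-J_{\nu}(\mu_t)=-\int_{\R^N}\mu_t\norm{\nabla\log\frac{\mu_t}{\nu}}^2dx$; the statement as printed omits the minus sign in front of the integral.
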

  \begin{proof}
See Appendix \ref{appendix: derivatives of entropy}.
  \end{proof}

  This lemma is a direct corollary of Theorem 24.2 in \cite{villani2009optimal} if we select function $U(x)=x\log x$. Then the energy functional $U_{\nu}$ degenerates into the relative entropy $H_{\nu}$. Therefore we can calculate derivatives of mutual information along the FP flow.
  \begin{theorem}\label{derivatives of mutual information}
      Along the FP flow for the reference measure $\nu =\mu_{\infty}= e^{-V}$, we have
  \begin{equation*}
          \begin{aligned}
              &\dfrac{d}{dt} I(X_0;X_t)= -J_{\nu}(\mu_0;\mu_t)\leqslant 0\\
              &\dfrac{d^2}{dt^2} I(X_0;X_t)=\\
               &\quad2\Psi \left( \mu _0|\mu _t \right)
              +4\int_{\R^{2N}}{\mu _{0,t}}
  \left< \nabla^2\log\gamma_t, \nabla ^2\log \mu _{0|t} \right> _{HS}dx_0dx_t\\
  &=2\Psi \left( \mu _0|\mu _t \right)-\\
  & \quad4\int_{\mathbb{R}^{2N}}{\mu _{0,t}}\left( \nabla \log \mu _{0|t} \right) ^{\top}\left( \nabla^2\log\gamma_t \right) \left( \nabla \log \mu _{0|t} \right) dx_0dx_t.
          \end{aligned}
      \end{equation*}
  Here $\gamma_t:=e^{\frac{V}{2}}\mu_t$ and
  the backward second-order Fisher information of $\mu_0$ given $\mu_t$ is defined by
  $$\Psi(\mu_0|\mu_t)=\int_{\R^{2N}}
  \mu_{t,0}\HS{\nabla^2\log\dfrac{\mu_{0|t}}{\nu}}^2  dx_0 dx_t.
  $$
  Sometimes we replace $\R^{2N}$ with $\R^N_0\times\R^N_t$ to distinguish the spaces of two variables $x_0, x_t$. 
  \end{theorem}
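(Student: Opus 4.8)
The plan is to reduce everything to Lemma~\ref{derivatives of entropy} through the identity $I(X_0;X_t)=H_\nu(\mu_{t\mid0})-H_\nu(\mu_t)$ — mutual information being the mutual version of $-H$, equivalently of $H_\nu$ — and then to pass from the ``forward'' quantities it produces to the ``backward'' ones in the claim by Bayes' rule and integration by parts.

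First I would observe that both the marginal $\mu_t$ and, for each fixed $x_0$, the conditional law $\mu_{t\mid0}(\cdot\mid x_0)$ solve the linear flow~\eqref{FPflow}: the latter is the solution issued from $\delta_{x_0}$, which by the kernel representation in Theorem~\ref{Thm:Existence and Uniqueness to the Cauchy Problem} is smooth and rapidly decreasing for every $t>0$. Hence Lemma~\ref{derivatives of entropy} applies to the marginal and, after averaging in $x_0\sim\mu_0$, to the conditional laws. Its first-derivative part gives $\tfrac{d}{dt}I=-\bigl(J_\nu(\mu_{t\mid0})-J_\nu(\mu_t)\bigr)=-J_\nu(\mu_0;\mu_t)$, which is $\le 0$ because $J_\nu(\mu_0;\mu_t)$ coincides with the backward Fisher information (Section~\ref{Fundamental quantities and mutual versions}). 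Its second-derivative part gives
\[
\tfrac{d^2}{dt^2}I=2\bigl(K_\nu(\mu_{t\mid0})-K_\nu(\mu_t)\bigr)+2\Bigl(\textstyle\int_{\R^N}\mu_0\,G_\nu(\mu_{t\mid0})\,dx_0-G_\nu(\mu_t)\Bigr).
\]

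The heart of the proof is to rewrite this last expression in backward form. The algebraic engine is Bayes' rule $\log\mu_{0,t}=\log\mu_t+\log\mu_{0\mid t}=\log\mu_0+\log\mu_{t\mid0}$: since $\mu_0$ does not depend on $x_t$, differentiating in $x_t$ yields $\nabla_{x_t}^{k}\log\mu_{t\mid0}=\nabla_{x_t}^{k}\log\mu_t+\nabla_{x_t}^{k}\log\mu_{0\mid t}$ for $k=1,2$. Substituting this into $K_\nu(\mu_{t\mid0})-K_\nu(\mu_t)$ (after writing $\nabla^2\log\tfrac{\mu}{\nu}=\nabla^2\log\mu+\nabla^2V$) and expanding the Hilbert--Schmidt squares produces the term $\int\mu_{0,t}\HS{\nabla_{x_t}^2\log\mu_{0\mid t}}^2$ together with a cross term paired against $\nabla_{x_t}^2\log\tfrac{\mu_t}{\nu}$. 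The same substitution in $\int_{\R^N}\mu_0 G_\nu(\mu_{t\mid0})\,dx_0-G_\nu(\mu_t)$ produces a diagonal term $\int\mu_{0,t}\langle(\nabla^2V)\nabla_{x_t}\log\mu_{0\mid t},\nabla_{x_t}\log\mu_{0\mid t}\rangle$ and a cross term $\int\mu_{0,t}\langle(\nabla^2V)\nabla_{x_t}\log\mu_{0\mid t},\nabla_{x_t}\log\tfrac{\mu_t}{\nu}\rangle$; the latter vanishes because $\mu_{0,t}\nabla_{x_t}\log\mu_{0\mid t}=\mu_t\nabla_{x_t}\mu_{0\mid t}$ and $\int_{\R^N}\mu_{0\mid t}(x_0\mid x_t)\,dx_0\equiv1$, so its $x_0$-integral is the gradient of a constant. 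The remaining diagonal $G_\nu$ term is then integrated by parts once in $x_t$, using $\sum_i\partial_i(\nabla^2V)_{ij}=\partial_j\Delta V$ together with $(\nabla^2V)\nabla V=\tfrac12\nabla\norm{\nabla V}^2$ and the definition $c=\tfrac14\norm{\nabla V}^2-\tfrac12\Delta V$; this converts it into second-order quantities that recombine with the $K_\nu$ part, and collecting everything and using $\nabla^2\log\gamma_t=\tfrac12\nabla^2V+\nabla_{x_t}^2\log\mu_t$ yields the claimed $2\Psi(\mu_0\mid\mu_t)+4\int_{\R^{2N}}\mu_{0,t}\langle\nabla^2\log\gamma_t,\nabla_{x_t}^2\log\mu_{0\mid t}\rangle_{HS}$. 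The second displayed form then follows from the first by one more integration by parts in $x_t$: writing $(\nabla_{x_t}^2\log\mu_{0\mid t})_{ij}=\partial_i\partial_j\log\mu_{0\mid t}$, moving $\partial_i$ onto $\mu_{0,t}(\nabla^2\log\gamma_t)_{ij}$, and again invoking $\int_{\R^N}\mu_{0\mid t}\,dx_0\equiv1$ to discard the terms that are not of quadratic-form type.

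I expect the main obstacle to be precisely this conversion: the bookkeeping of the nested integrations by parts in $x_t$, and in particular checking that the purely first-order $G_\nu$ contributions — and the third-derivative-in-$V$ terms they generate — reassemble exactly into $\Psi(\mu_0\mid\mu_t)$ and the $\nabla^2\log\gamma_t$ cross term with no leftover. A secondary issue is justifying that all boundary terms vanish (here the standing assumption that the solutions are rapidly decreasing is used) and that $\int_{\R^N}\mu_{0\mid t}\,dx_0\equiv1$ may be differentiated in $x_t$ under the integral sign, which follows from the regularity and decay established in Theorem~\ref{Thm:Existence and Uniqueness to the Cauchy Problem}.
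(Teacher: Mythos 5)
Your proposal is correct and follows essentially the same route as the paper: Lemma~\ref{derivatives of entropy} applied to the marginal and the conditional laws, the Bayes/Hessian decomposition $\nabla^2\log\frac{\mu_{t|0}}{\nu}=\nabla^2\log\frac{\mu_t}{\nu}+\nabla^2\log\mu_{0|t}$ (the paper's Lemma~\ref{expansion of Kv}), cancellation of the cross terms and conversion of the quadratic-gradient terms via $\int_{\R^N_0}\mu_{0|t}\,dx_0\equiv 1$ (the paper's Lemma~\ref{integration by parts}), and the final recombination $2\nabla^2\log\mu_t+\nabla^2 V=2\nabla^2\log\gamma_t$, followed by one more application of the same normalization trick to pass to the quadratic-form expression. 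The only (harmless) deviation is your handling of the diagonal $G_\nu$ term by a genuine integration by parts in $x_t$, which generates $\nabla\Delta V$ and $\nabla\mu_t$ terms that indeed vanish after the $x_0$-integration; the paper instead uses the pointwise identity $\mu_{0|t}\nabla\log\mu_{0|t}(\nabla\log\mu_{0|t})^{\top}=\nabla^2\mu_{0|t}-\mu_{0|t}\nabla^2\log\mu_{0|t}$ together with the same normalization, so the identities involving $(\nabla^2V)\nabla V$ and $c$ that you invoke are not actually needed.
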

  \begin{proof}
See Appendix \ref{appendix: derivatives of mutual information}.
  \end{proof}
  
  \subsection{Convexity of Mutual Information along the FP Flow}
  We elucidate our main results concerning
  the convexity properties of mutual information along the FP flow. For the entirety of this subsection, we designate $X_t \sim \mu_t$ to represent the FP process evolving from an initial state $X_0 \sim \mu_0$. The idea of proofs follows the work\cite{ishige2020logconcavity} of Ishige et al.
  
  \begin{lemma}\label{step1}
      Let Assumption \ref{Assumption for potential},\ref{Assumption for initial} hold. Let $\Omega$ be a bounded smooth convex domain, $\mu_0\in C(\Bar{\Omega})$ and $\mu_0 =0 $ on $\partial \Omega$. Additionally we suppose $X_0\sim \mu_0$ is $\frac{V}{2}$-relatively log-concave where $V(x)$ is the potential function of FP flow \eqref{FPflow}. Then the mutual information $I(X_0;X_t)$ is convex over $t$ perpetually.
  \end{lemma}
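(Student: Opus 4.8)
The plan is to use the formula for the second derivative of mutual information obtained in Theorem \ref{derivatives of mutual information}, namely
\[
\dfrac{d^2}{dt^2} I(X_0;X_t)=2\Psi \left( \mu _0|\mu _t \right)-4\int_{\mathbb{R}^{2N}}{\mu _{0,t}}\left( \nabla \log \mu _{0|t} \right) ^{\top}\left( \nabla^2\log\gamma_t \right) \left( \nabla \log \mu _{0|t} \right) dx_0dx_t,
\]
and to show that under the stated hypotheses the entire right-hand side is nonnegative. Since $\Psi(\mu_0|\mu_t)\geqslant 0$ always (it is an integral of a squared Hilbert--Schmidt norm against a probability density), it suffices to control the sign of the second term. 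The key observation is that the second term is nonnegative as soon as $\nabla^2\log\gamma_t \preceq 0$, i.e.\ as soon as $\gamma_t = e^{V/2}\mu_t$ is log-concave in the spatial variable $x_t$, because then the quadratic form $(\nabla\log\mu_{0|t})^\top(\nabla^2\log\gamma_t)(\nabla\log\mu_{0|t})$ is pointwise $\leqslant 0$ and it is integrated against the nonnegative density $\mu_{0,t}$, so the $-4\int(\cdots)$ contributes a nonnegative amount.

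So the crux is to establish that $\gamma_t$ is log-concave for all $t>0$. First I would recall that $\gamma_t$ solves the imaginary-time Schr\"odinger equation \eqref{schrodinger2}, which after the $\tilde\gamma_t = \gamma_t e^{-\beta t}$ reduction (the Remark after Assumption \ref{Assumption for initial}) becomes the parabolic equation $\partial_t\tilde\gamma + (c(x)+\beta)\tilde\gamma = \Delta\tilde\gamma$ with $c(x)+\beta$ nonnegative and, by Assumption \ref{Assumption for potential}, convex; note that the $e^{-\beta t}$ factor does not affect spatial log-concavity. Then I would invoke Corollary \ref{Schr\"odinger equation_log concave}: on the bounded smooth convex domain $\Omega$, with the Dirichlet boundary condition (which holds because $\mu_0 = 0$ on $\partial\Omega$ forces $\gamma_0 = e^{V/2}\mu_0 = 0$ there, and the boundary condition propagates), and with log-concave initial data, the solution $\gamma_t$ is spatially log-concave for every $t>0$. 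The hypothesis that $X_0$ is $\tfrac{V}{2}$-relatively log-concave is exactly the statement that $\mu_0 e^{V/2} = \gamma_0$ is log-concave, i.e.\ $-\log\mu_0 - V/2$ is convex, which supplies the log-concave initial datum the corollary requires. Thus $\log\gamma_t(\cdot)$ is concave on $\Omega$, hence $\nabla^2\log\gamma_t\preceq 0$ on $\Omega$ for every $t>0$.

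With $\nabla^2\log\gamma_t \preceq 0$ in hand, both terms on the right-hand side of the second-derivative formula are nonnegative, so $\frac{d^2}{dt^2}I(X_0;X_t)\geqslant 0$ for all $t>0$, which is the claimed convexity. The main obstacle I anticipate is the bookkeeping around domains and regularity: the second-derivative formula in Theorem \ref{derivatives of mutual information} was derived under the standing smoothness and rapid-decay assumptions, and here we are working on a bounded domain $\Omega$ with Dirichlet data, so one must check that integration by parts goes through with no boundary contributions (the vanishing of $\mu_0$, and hence $\gamma_0$, on $\partial\Omega$, together with the smoothing of the parabolic flow, should handle this) and that the conditional density $\mu_{0|t}$ and its logarithmic gradient are well-defined and sufficiently integrable for the quadratic-form manipulation. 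A secondary point to verify carefully is that spatial log-concavity of $\gamma_t$ as a function of $x_t$ is what is needed — the matrix $\nabla^2\log\gamma_t$ appearing in the formula is the Hessian in the $x_t$ variable only — and that Corollary \ref{Schr\"odinger equation_log concave} indeed applies to equation \eqref{schrodinger2} (via the $\tilde\gamma$ reduction) rather than to \eqref{FPflow} directly. Once these technicalities are dispatched, the sign argument itself is immediate.
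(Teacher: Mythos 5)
Your proposal matches the paper's own proof: both establish log-concavity of $\gamma_t=e^{V/2}\mu_t$ via Corollary \ref{Schr\"odinger equation_log concave} applied to the Schr\"odinger equation \eqref{schrodinger2} (with the zero Dirichlet data inherited from $\mu_0=0$ on $\partial\Omega$ and the log-concave initial datum supplied by the $\tfrac{V}{2}$-relative log-concavity of $\mu_0$), and then conclude convexity from the sign of the second-derivative formula in Theorem \ref{derivatives of mutual information}, using $\Psi(\mu_0|\mu_t)\geqslant 0$ and $\nabla^2\log\gamma_t\preceq 0$. The argument is correct and essentially identical to the paper's, just written out in more detail.
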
 
  
  \begin{proof}
See Appendix \ref{appendix: proof of step1}.
  \end{proof}
  We say $X\sim \mu$ is $\lambda$-relatively log-concave for some strictly nonnegative function $\lambda(x)$ if $-\nabla^2 \log\mu(x)\succeq \nabla^2 \lambda(x)$. Furthermore, we prove that the condition $\mu_0=0$ on $\partial \Omega$ is indeed redundant.
  
  \begin{lemma}\label{step2}
      Let Assumption \ref{Assumption for potential},\ref{Assumption for initial} hold.
      Let $\Omega$ be a bounded smooth convex domain, $\mu_0\in C(\Bar{\Omega})$ but we do not assume that $\mu_0 =0 $ on $\partial \Omega$. Additionally we suppose $\mu_0$ is $\frac{V}{2}$-relatively log-concave where $V(x)$ is the potential function of FP flow \eqref{FPflow}. Then the mutual information $I(X_0;X_t)$ is convex over $t$ perpetually.
  \end{lemma}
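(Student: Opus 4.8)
The plan is to reduce Lemma \ref{step2} to Lemma \ref{step1} by approximating $\mu_0$ from the interior of $\Omega$, a device of the same flavour as the boundary‑condition removal in the work of Ishige et al. Since $\Omega$ is convex, the function $d(x):=\mathrm{dist}(x,\partial\Omega)$ is concave on $\Omega$; for $\varepsilon>0$ I would set $h_\varepsilon:=\min(1,\varepsilon^{-1}d)$, which is continuous and nonnegative on $\overline\Omega$, equals $1$ on $\{d\geqslant\varepsilon\}$, vanishes on $\partial\Omega$, and is log‑concave (being a minimum of nonnegative concave functions, hence $\log h_\varepsilon$ is concave). With $Z_\varepsilon:=\int_\Omega\mu_0 h_\varepsilon\,dx$ I would then take $\mu_0^\varepsilon:=Z_\varepsilon^{-1}\mu_0 h_\varepsilon$. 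This $\mu_0^\varepsilon\in C(\overline\Omega)$ is a probability density that vanishes on $\partial\Omega$, and
\[
\log\mu_0^\varepsilon+\tfrac12 V=\big(\log\mu_0+\tfrac12 V\big)+\log h_\varepsilon-\log Z_\varepsilon
\]
is a sum of concave functions (the first by the $\tfrac V2$‑relative log‑concavity of $\mu_0$), so $\mu_0^\varepsilon$ is $\tfrac V2$‑relatively log‑concave. Hence Lemma \ref{step1} applies to each $\mu_0^\varepsilon$ and gives that $t\mapsto I(X_0^\varepsilon;X_t^\varepsilon)$ is convex, where $X_t^\varepsilon$ denotes the FP flow started from $\mu_0^\varepsilon$.

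The remaining task is to pass to the limit $\varepsilon\to0$. Because convexity is preserved under pointwise limits, it suffices to show $I(X_0^\varepsilon;X_t^\varepsilon)\to I(X_0;X_t)$ for each fixed $t>0$. Since $0\leqslant h_\varepsilon\leqslant1$ and $h_\varepsilon\to1$ pointwise on $\Omega$, dominated convergence yields $Z_\varepsilon\to1$ and $\mu_0^\varepsilon\to\mu_0$ in $L^1$. I would split $I(X_0^\varepsilon;X_t^\varepsilon)=H(X_t^\varepsilon)-H(X_t^\varepsilon|X_0^\varepsilon)$ and treat the two terms separately. The conditional term is the easy one: the law of $X_t$ given $X_0=x_0$ is the transition law $Q_t\delta_{x_0}$, which does not depend on $\varepsilon$, so $H(X_t^\varepsilon|X_0^\varepsilon)=\int\mu_0^\varepsilon(x_0)\,\Theta(x_0)\,dx_0$ with $\Theta(x_0):=H(Q_t\delta_{x_0})$ bounded on the compact set $\overline\Omega$ by the Gaussian‑type kernel bounds of Theorem \ref{Thm:Existence and Uniqueness to the Cauchy Problem}; hence $H(X_t^\varepsilon|X_0^\varepsilon)\to\int\mu_0\Theta\,dx_0=H(X_t|X_0)$. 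For the marginal term I would use the kernel representation $\mu_t=e^{-V/2}\int K(\cdot,y,t)\gamma_0(y)\,dy$ together with $\|\gamma_0^\varepsilon-\gamma_0\|_{L^1}\leqslant e^{\|V\|_{L^\infty(\Omega)}/2}\|\mu_0^\varepsilon-\mu_0\|_{L^1}\to0$ and the Gaussian upper bound on $K$ (using $V\geqslant0$ and the boundedness of $\Omega$) to get $\mu_t^\varepsilon\to\mu_t$ uniformly on $\R^N$ with a common dominating bound of Gaussian decay; since $s\mapsto-s\log s$ is bounded on bounded sets and vanishes at $0$, dominated convergence gives $H(X_t^\varepsilon)\to H(X_t)$. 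Combining the two, $t\mapsto I(X_0;X_t)$ is convex on $(0,\infty)$.

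The routine parts are the construction and properties of $h_\varepsilon$ and the check that $\mu_0^\varepsilon$ meets the hypotheses of Lemma \ref{step1}. The main obstacle is the limiting step — establishing that mutual information is continuous under $L^1$‑perturbations of the initial datum along the FP flow — since this is where one must assemble the quantitative Gaussian‑type kernel estimates and tail bounds from Section \ref{Existence and Uniqueness of Classical Solutions to FP Flow} to license the dominated‑convergence arguments for both the marginal and the conditional entropies. Two minor bookkeeping points: mutual information is finite only for $t>0$ (it diverges at $t=0$), so convexity is asserted on $(0,\infty)$; and $\mu_0$ need not be supported on all of $\Omega$, but this causes no difficulty since $h_\varepsilon$ simply multiplies $\mu_0$ and the relevant concavity statements are understood on the (convex) set $\{\mu_0>0\}$.
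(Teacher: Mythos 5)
Your proposal is correct in outline, but it follows a genuinely different route from the paper at both ends of the argument. For the approximation, you cut off $\mu_0$ with $h_\varepsilon=\min(1,\varepsilon^{-1}d)$, exploiting concavity of the distance function on a convex domain; the paper instead builds its zero-boundary log-concave approximants $\gamma_{0,n}$ by smoothing with the heat semigroup (Lemma \ref{log-concavity of heat flow}) and adding $m_n^{-1}\log\eta$ with $\eta$ the torsion function, whose log-concavity is Kennington's theorem (Lemma \ref{log-concavity of a harmonic equation}). Your construction is simpler and suffices here because $\mu_0\in C(\Bar{\Omega})$; note only that $h_\varepsilon$ is not smooth, so the $\frac{V}{2}$-relative log-concavity of $\mu_0^\varepsilon$ must be read as log-concavity of $\gamma_0^\varepsilon=e^{V/2}\mu_0^\varepsilon$ rather than a Hessian inequality — which is harmless, since that is all the proof of Lemma \ref{step1} (via Corollary \ref{Schr\"odinger equation_log concave}) actually uses. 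The more substantial divergence is in the limit passage: the paper never passes to the limit in the mutual information. It passes to the limit in the solutions, $\gamma_n(t,\cdot)=\int_\Omega K_\Omega(\cdot,y,t)\gamma_{0,n}(y)\,dy\to\gamma(t,\cdot)$ by dominated convergence, invokes preservation of log-concavity under convergence to conclude that $\gamma_t$ for the \emph{original} datum is log-concave, and then convexity of $I(X_0;X_t)$ for the original flow follows directly from the second-derivative formula of Theorem \ref{derivatives of mutual information}. You instead prove convexity of $I(X_0^\varepsilon;X_t^\varepsilon)$ for each approximant and let $\varepsilon\to0$ in the mutual information itself, which forces you to establish continuity of both $H(X_t^\varepsilon)$ and $H(X_t^\varepsilon|X_0^\varepsilon)$ under $L^1$ perturbation of the initial datum; this is exactly the part you only sketch (boundedness of $x_0\mapsto H(\mu_{t|0}(\cdot|x_0))$ on $\Bar\Omega$, uniform Gaussian envelopes for $\mu_t^\varepsilon$, domination of $-s\log s$), and while each step is plausible from the kernel bounds of Theorem \ref{Thm:Existence and Uniqueness to the Cauchy Problem} (using $c\geqslant0$, $V\geqslant0$), it is analytically heavier than the paper's route, which buys the same conclusion from mere pointwise convergence of $\gamma_n$ plus stability of log-concavity. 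Your observation about $t=0$ (convexity asserted on $(0,\infty)$ since $I$ may be infinite initially) is a fair caveat that the paper glosses over.
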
 
  \begin{proof}
See Appendix \ref{appendix: proof of step2}.
  \end{proof}
  For an unbounded convex domain $\Omega$ in $\R^N$, we consider using a sequence of bounded smooth convex domains to converge to it.
  Then we deduce that the boundness of $\Omega$ is also redundant. Here we shall only consider the case of $\Omega=\R^N$ in the following discussion.
  
  \begin{theorem}\label{step3}
      Let Assumption \ref{Assumption for potential},\ref{Assumption for initial} hold.
      Let $\mu_0\in C(\R^N)$. Additionally we suppose $\mu_0$ is $\frac{V}{2}$-relatively log-concave where $V(x)$ is the potential function of FP flow \eqref{FPflow}. Then the mutual information $I(X_0;X_t)$ is convex over $t$ perpetually.
  \end{theorem}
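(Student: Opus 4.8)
The plan is to upgrade Lemma \ref{step2}, which handles bounded smooth convex domains, to the case $\Omega=\R^N$ by an exhaustion argument. First I would fix an increasing sequence $\Omega_k$ of bounded smooth convex domains with $\Omega_k\uparrow\R^N$ (for instance balls $B(0,k)$, smoothed if necessary). For each $k$, I would restrict the initial datum: take $\mu_0^{(k)}:=\mu_0\,\mathbf{1}_{\Omega_k}$, or better, a truncation that is continuous on $\bar\Omega_k$ and vanishes on $\partial\Omega_k$ so that Lemma \ref{step1} applies directly, and which still satisfies the $\tfrac{V}{2}$-relative log-concavity hypothesis on $\Omega_k$ (restricting a log-concave function to a convex subdomain preserves log-concavity, and the relative condition $-\nabla^2\log\mu_0\succeq\tfrac12\nabla^2 V$ is a pointwise inequality that is inherited on $\Omega_k$). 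Renormalizing $\mu_0^{(k)}$ to a probability density changes $\log\mu_0^{(k)}$ by an additive constant and hence does not affect any Hessian condition; it also does not affect mutual information, which depends only on the joint law through nonlinear (affine-invariant) functionals. Let $X_0^{(k)}\sim\mu_0^{(k)}$ and let $X_t^{(k)}$ be the FP flow in $\Omega_k$ (with the absorbing boundary condition used in Section \ref{Background of Log-concavity, Entropy and Mutual Versions}, via the kernel $K_{\Omega_k}$ from \eqref{kernel function in a bounded domain}); by Lemma \ref{step2}, $t\mapsto I(X_0^{(k)};X_t^{(k)})$ is convex for every $k$.

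The second step is a limit passage. I would show that the finite-domain FP flows converge to the whole-space FP flow: as $k\to\infty$, the kernels $K_{\Omega_k}(x,y,t)$ increase monotonically to the whole-space heat-type kernel $K(x,y,t)$ of Theorem \ref{Thm:Existence and Uniqueness to the Cauchy Problem} (this monotone convergence of Dirichlet heat kernels of Schrödinger operators on an exhaustion is classical), and $\mu_0^{(k)}\to\mu_0$ in $L^1$ with the renormalization constants tending to $1$. Hence $\mu_t^{(k)}\to\mu_t$ and, more importantly, the joint laws $\mu_{0,t}^{(k)}$ converge to $\mu_{0,t}$ (in distribution, and I would aim for something strong enough — e.g.\ convergence in $L^1$ of densities on compacts plus tightness — to pass the entropy functionals to the limit). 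The key analytic point is lower semicontinuity / continuity of mutual information under this convergence: mutual information is weakly lower semicontinuous in general, which already gives $I(X_0;X_t)\le\liminf_k I(X_0^{(k)};X_t^{(k)})$; combined with pointwise convexity of each term this is not quite enough, so I would instead establish genuine convergence $I(X_0^{(k)};X_t^{(k)})\to I(X_0;X_t)$ for each fixed $t$, using the explicit kernel representation and Assumption \ref{Assumption for initial} ($\gamma_0$ bounded) to get dominated-convergence control on $\int\mu_{0,t}\log\mu_{0|t}$. A pointwise limit of convex functions is convex, so convexity of $I(X_0;X_t)$ follows.

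The main obstacle I anticipate is precisely this last interchange of limit and the entropy integrals: mutual information is an integral of a logarithm, which is neither bounded nor uniformly integrable for free, so I must produce uniform-in-$k$ integrability bounds. The cleanest route is to bound things through the Schrödinger representation: the transformed densities $\gamma_t^{(k)}=e^{V/2}\mu_t^{(k)}$ are given by $\E[\gamma_0(\sqrt2 W_t)e^{-\int_0^t c}]$ restricted to paths staying in $\Omega_k$, hence are uniformly bounded (Assumption \ref{Assumption for initial}) and monotone in $k$, and $c\ge 0$ (by the Remark, WLOG) gives an exponential upper bound; together with the lower bound on $c$ from Assumption \ref{Assumption for potential} and standard Gaussian tail estimates for the heat kernel one controls $\mu_{0|t}\log\mu_{0|t}$ from above and below uniformly in $k$, on compact sets, with an integrable Gaussian-type tail outside. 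Monotone convergence then upgrades the convergence of $\mu_t^{(k)}$ and of the joint densities to convergence of $H_\nu(\mu_t^{(k)})$, $H_\nu(\mu_{t|0}^{(k)})$ and hence of $I(X_0^{(k)};X_t^{(k)})$. Once continuity in $k$ is secured, the conclusion is immediate. Finally, I would remark that the result extends verbatim to an arbitrary (possibly unbounded) convex $\Omega$ by the same exhaustion, so stating it for $\Omega=\R^N$ entails no loss.
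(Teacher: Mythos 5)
Your exhaustion-by-balls idea is the same device the paper uses, but you take the limit at the level of the information functional rather than at the level of the density, and that is where the argument has a genuine gap. You run the flow in each $\Omega_k$ with absorbing boundary, invoke Lemma \ref{step2} to get convexity of $t\mapsto I(X_0^{(k)};X_t^{(k)})$, and then want $I(X_0^{(k)};X_t^{(k)})\to I(X_0;X_t)$ for each fixed $t$. Two problems. First, for the killed flow the time-$t$ law is defective (mass exits through $\partial\Omega_k$), so the mutual information of the pair $(X_0^{(k)},X_t^{(k)})$ is not defined without conditioning on survival, and after conditioning the convexity furnished by Lemma \ref{step2} does not transfer verbatim; as the paper's own proof of Theorem \ref{step3} makes clear, the bounded-domain lemmas are really used only through their log-concavity content for the transformed density $\gamma_t=e^{V/2}\mu_t$, not as statements about a killed process's information. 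Second, the decisive step of your plan, the convergence of the entropy integrals, is only sketched: the integrands involve $\log\mu^{(k)}_{0|t}$, the Dirichlet kernels $K_{\Omega_k}$ degenerate at $\partial\Omega_k$, and the boundedness of $\gamma_0$ from Assumption \ref{Assumption for initial} gives only one-sided control, so the uniform-in-$k$ integrability you need (especially lower bounds on the conditional densities) is not established and is by far the hardest part of the route you chose. Mutual information is merely lower semicontinuous under weak convergence, so nothing short of a genuine dominated-convergence argument closes this step.

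The paper closes the proof by taking the limit where it is cheap. By Theorem \ref{derivatives of mutual information}, $\frac{d^2}{dt^2}I(X_0;X_t)\geqslant 0$ as soon as $\nabla^2\log\gamma_t\preceq 0$ on $\R^N$ (since $\Psi(\mu_0|\mu_t)\geqslant 0$), so the only thing to prove is spatial log-concavity of $\gamma_t$ in the whole space. The exhaustion is then applied to the PDE solution itself: on $\Omega_n=B_n(0)$ one takes the restriction of the whole-space solution $\gamma$ as data, Lemma \ref{step2} gives log-concavity of $\gamma_n(t,\cdot)$ in $\Omega_n$, the Feynman--Kac bound $\gamma(t,\cdot)\leqslant e^{t\Delta}\gamma_0$ gives the $L^1$ tail estimate $\int_{\R^N\setminus\Omega_n}\gamma(t,x)\,dx\to 0$, and the comparison principle upgrades this to monotone pointwise convergence $\gamma_n\to\gamma$; log-concavity is stable under such limits, so $\gamma_t$ is log-concave on $\R^N$ and convexity of $I(X_0;X_t)$ follows at once. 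If you want to salvage your proposal, replace the limit of mutual informations by a limit of the densities $\gamma_t^{(k)}$ (or their logarithms) and apply the whole-space second-derivative formula a single time at the end; as written, the interchange of limit and entropy integrals, together with the defective-law issue, leaves the proof incomplete.
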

  \begin{proof}
See Appendix \ref{appendix: proof of step3}.
  \end{proof}
  In particular, if there exists a sufficiently large time, such that the distribution
  at this time is $\frac{V}{2}$-relatively log-concave, then mutual information preserves convexity after this time.
  \begin{corollary}
      \label{step4}
      Let Assumption \ref{Assumption for potential},\ref{Assumption for initial} hold.
      Let $\mu_T\in C(\R^N), T\geqslant0$. If $\mu_T$ is $\frac{V}{2}$-relatively log-concave where $V(x)$ is the potential function of FP flow \eqref{FPflow}. Then the mutual information $I(X_0;X_t)$ preserves convexity for $t>T$.
  \end{corollary}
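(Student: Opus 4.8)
The plan is to avoid invoking Theorem~\ref{step3} as a black box: doing so would only give convexity of $I(X_T;X_{T+s})$ in $s$, whereas we need convexity of $I(X_0;X_t)$, the mutual information between the \emph{original} input $X_0$ and $X_t$. Instead I would argue directly from the second-derivative identity of Theorem~\ref{derivatives of mutual information}, which expresses $\frac{d^2}{dt^2}I(X_0;X_t)$ as $2\Psi(\mu_0|\mu_t)$ minus $4\int_{\R^{2N}}\mu_{0,t}\,(\nabla\log\mu_{0|t})^{\top}(\nabla^2\log\gamma_t)(\nabla\log\mu_{0|t})\,dx_0dx_t$. The term $\Psi(\mu_0|\mu_t)$ is unconditionally nonnegative, being the integral of a squared Hilbert--Schmidt norm, so the sign of $\frac{d^2}{dt^2}I(X_0;X_t)$ is governed entirely by $\nabla^2\log\gamma_t$: if $\nabla^2\log\gamma_t\preceq 0$ pointwise in $x_t$, then the quadratic form in the integrand is $\leqslant 0$, the integral term is $\geqslant 0$, and hence $\frac{d^2}{dt^2}I(X_0;X_t)\geqslant 0$. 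Recalling $\gamma_t=e^{V/2}\mu_t$, so that $\nabla^2\log\gamma_t=\frac{1}{2}\nabla^2 V+\nabla^2\log\mu_t$, the condition $\nabla^2\log\gamma_t\preceq 0$ is exactly that $\mu_t$ is $\frac{V}{2}$-relatively log-concave, equivalently that $\gamma_t$ is spatially log-concave. Thus the corollary reduces to showing that $\gamma_t$ is log-concave for every $t>T$.

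This is where the hypothesis enters. That $\mu_T$ is $\frac{V}{2}$-relatively log-concave says precisely that $\gamma_T=e^{V/2}\mu_T$ is log-concave. The evolution equation~\eqref{schrodinger2} for $\gamma$ has time-independent coefficients, so its solution on $[T,\infty)$ with datum $\gamma_T$ coincides with its solution on $[0,\infty)$ started from $\gamma_T$; consequently the log-concavity-preservation argument already used to prove Lemma~\ref{step1}, Lemma~\ref{step2} and Theorem~\ref{step3} --- Corollary~\ref{Schr\"odinger equation_log concave} on bounded smooth convex domains with Dirichlet data, followed by the exhaustion of $\R^N$ by such domains and passage to the limit, using that log-concavity is preserved under convergence in distribution --- applies verbatim with $\gamma_T$ in place of $\gamma_0$ and yields that $\gamma_t$ is log-concave for all $t>T$. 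Before running this one must check that $\gamma_T$ satisfies the standing hypotheses: nonnegativity and continuity are inherited from $\mu_T\in C(\R^N)$, while boundedness of $\gamma_T$ follows from the representation formula of Theorem~\ref{Thm:Existence and Uniqueness to the Cauchy Problem} (or from the reduction of the Remark to nonnegative $c$ together with the comparison principle), given that $\gamma_0$ is bounded by Assumption~\ref{Assumption for initial}.

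Putting the pieces together: for each fixed $t>T$ we have $\nabla^2\log\gamma_t\preceq 0$, hence $\frac{d^2}{dt^2}I(X_0;X_t)\geqslant 0$ by the identity of Theorem~\ref{derivatives of mutual information} (whose smoothness requirements hold for $t>0$ by the $C^{1,\infty}$ regularity in Theorem~\ref{Thm:Existence and Uniqueness to the Cauchy Problem}), hence $t\mapsto I(X_0;X_t)$ is convex on $(T,\infty)$, which is the claim; taking $T=0$ recovers Theorem~\ref{step3}. I expect the main obstacle to be the last step of the log-concavity propagation, namely justifying the domain-exhaustion and limiting argument of Lemma~\ref{step2} and Theorem~\ref{step3} when it is initiated at time $T$ rather than at time $0$: one must confirm that $\gamma_T$ --- now produced by the flow rather than prescribed --- carries enough regularity and decay for the bounded-domain Dirichlet approximations to converge in distribution to $\gamma_t$. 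Because the equation for $\gamma$ is autonomous, time $T$ simply plays the role of a fresh initial time and no genuinely new difficulty appears, but this reduction should be spelled out explicitly rather than taken for granted.
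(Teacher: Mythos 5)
Your proposal is correct and follows essentially the route the paper intends for this corollary (which it states without a separate written proof): reduce the sign of $\frac{d^2}{dt^2}I(X_0;X_t)$ via Theorem~\ref{derivatives of mutual information} to the pointwise log-concavity of $\gamma_t=e^{V/2}\mu_t$, note that $\frac{V}{2}$-relative log-concavity of $\mu_T$ means $\gamma_T$ is log-concave, and use the autonomy of the Schr\"odinger equation \eqref{schrodinger2} together with the preservation results (Corollary~\ref{Schr\"odinger equation_log concave}, Lemmas~\ref{step1}--\ref{step2}, Theorem~\ref{step3}) restarted at time $T$ to get $\nabla^2\log\gamma_t\preceq 0$ for all $t>T$. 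Your explicit observation that citing Theorem~\ref{step3} as a black box would only control $I(X_T;X_{T+s})$, and your check that $\gamma_T$ inherits boundedness and continuity so the argument can be initiated at time $T$, are careful refinements of the same argument rather than a different approach.
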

  
  
  %

  \section{Discussion and Future Work}
  \label{sec:conclusion}
  
  In this paper we have studied the convexity of mutual information along the Fokker-Planck ﬂow. We considered the gradient ﬂow interpretation of the Fokker-Planck equation in the space of probability measures, and derived formulae for the various derivatives of relative entropy and mutual information. We have shown that mutual information is perpetually convex under specific conditions on the initial distribution and the potential function. 
  These results generalize the behaviors seen in the heat ﬂow and OU flow\cite{wibisono2018convexity,wibisono2018convexityOU}.
  
  For simplicity in this paper we have treated only the case when the initial state is $\frac{V}{2}$-relatively log-concave and the potential function satisfies Assumption \ref{Assumption for potential}. Indeed, there is an interesting dichotomy in which we understand the intricate properties of the OU process since we have an explicit solution, whereas we know very little about the FP flow. 
  This paper demonstrates the convexity of mutual information along the FP flow using a method without reliance on explicit solutions. The primary approach is based on deriving the expression for the second derivative of mutual information through the application of optimal transport theory and gradient flow theory, as well as leveraging the research on the convexity of solutions to parabolic equations by Ishige et al\cite{ishige2020spatial}.
  
  Some interesting future directions are to weaken the assumptions and explore some new properties of mutual information. For example, would the mutual information be eventually convex if the initial state is not $\frac{V}{2}$-relatively log-concave? We might consider imposing constraints on the initial state, such as requiring it to be strongly log-concave, bounded, or possessing finite fourth moments and Fisher information. This necessitates further investigation into the properties of the solutions to the Fokker-Planck equation.


\bibliographystyle{IEEEtran}
\bibliography{references} 


\onecolumn
\appendices

\section{Some Applications of Stochastic Analysis in PDE}
\label{appendix: Some Applications of Stochastic Analysis in PDE}
In this subsection, we consider the applications of stochastic analysis in solving PDE. 
\begin{example}
    Consider the heat flow:
   $$ \begin{cases}
        \partial_t u = \Delta u, (t,x)\in \R_+\times \R^N\\
        u(0,x)=u_0(x), u_0\in C_b(\R^N)
    \end{cases}$$
    and we assume that $u\in C^{1,2}(\R_+\times\R^N)$. Define a stochastic process
    $Z_s:=\frac{1}{2}u(2(t-s),2W_s), 0\leqslant s\leqslant t$. By It$\hat{o}$'s formula, we have
\begin{equation*}
    \begin{aligned}
        dZ_s&=\dfrac{1}{2}\cdot(-2)\partial_t u ds+\dfrac{1}{2}\cdot2\partial_{i}u dW_s^i+\dfrac{1}{2}\cdot\dfrac{1}{2}\partial_{ij}u d\left<2W_s^i, 2W_s^j\right>\\
        &=-\partial_t u+\partial_i u dW_s^i+\Delta u ds\\
        &=0+\partial_i u dW_s^i.
    \end{aligned}
\end{equation*}
Hence $Z_s$ is a martingale. Here we use the Einstein summation convention, $\partial_i u$ is the partial derivative of $u$ with respect to $x_i$, and $\left<X,Y\right>$ is the quadratic covariation of two process $X,Y$. By the properties of martingales, we have
\begin{equation*}
    \begin{aligned}
        \E[Z_t]&=\E[Z_0]=\dfrac{1}{2}\E[u(2t,2W_0)]\\
        &=\dfrac{1}{2}u(2t,2W_0)
        =\dfrac{1}{2}u(2t,2x).
    \end{aligned}
\end{equation*}
Note that we always regard $W_0$ as the starting point $x$. In addition, this expectation can be represented by an integration
\begin{equation*}
    \begin{aligned}
        \E[Z_t]&=\dfrac{1}{2}\E[u(0,2W_t)]
        =\dfrac{1}{2}\E[u_0(2W_t)]\\
        &=\dfrac{1}{2}\int_{\R^N} u_0(2y)(2\pi t)^{-\frac{n}{2}}e^{-\frac{\norm{y-x}^2}{2t}} dy\\
        &\xlongequal{z=2y}
\dfrac{1}{2}\cdot \dfrac{1}{2^n}\int_{\mathbb{R} ^N}{u_0\left( z\right)}\left( 2\pi t \right) ^{-\frac{n}{2}}e^{-\frac{\left\| z/2-x \right\| ^2}{2t}}dz.
    \end{aligned}
\end{equation*}
Hence
\begin{equation}\label{solve heat flow}
    \begin{aligned}
        u(2t,2x)&=\int_{\R^N} u_0(2y)(2\pi t)^{-\frac{n}{2}}e^{-\frac{\norm{y-x}^2}{2t}} dy\\
        u(t,x)&=
\dfrac{1}{2^n}\int_{\mathbb{R} ^N}{u_0\left( y \right)}\left( \pi t \right) ^{-\frac{n}{2}}e^{-\frac{\left\| y/2-x/2 \right\| ^2}{t}}dy\\
&=
\int_{\mathbb{R} ^N}{u_0\left( y \right)}\left( 4\pi t \right) ^{-\frac{n}{2}}e^{-\frac{\left\| y-x \right\| ^2}{4t}}dy.
    \end{aligned}
\end{equation}
\end{example}
In equation (\ref{solve heat flow}), we have solved the heat equation with initial distribution. Then Lemma \ref{stochastic analysis_FP flow} can be proved by similar methods.
\begin{lemma}\label{stochastic analysis_FP flow}
    Consider this Schr\"odinger equation:
\begin{equation*}
    \begin{cases}
        \partial_t u=\Delta u -c(x)u, (t,x)\in\R_+\times \R^N\\
        u(0,x)=u_0(x), u_0\in C_b(\R^N)
    \end{cases}
\end{equation*}
and we assume that $u\in C^{1,2}(\R_+\times \R^N)$, that is, $u\in C^{1}(\R_+)$ over $t$ and $u\in C^{2}(\R^N)$ over $x$. Then the solution has the following representations:
\begin{equation*}
    \begin{aligned}
        u(t,x)&=\E_{x/2}\left[
        u_0(2W_{t/2})e^{-2\int_{0}^{t/2}c(2W_r)dr}
        \right]\\
        &=\E_{x/\sqrt{2}}\left[
        u_0(\sqrt{2}W_{t})e^{-\int_{0}^{t}c(\sqrt{2}W_r)dr}
        \right]
        \\
        &\equiv\int_{\R^N} K(x,y,t)u_0(y) dy
    \end{aligned}
\end{equation*}
\end{lemma}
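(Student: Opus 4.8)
The plan is to mimic the martingale argument of the preceding Example, now inserting the multiplicative (Feynman--Kac) functional that accounts for the zeroth-order term $-c(x)u$. Fix $t>0$ and $x\in\R^N$, let $(W_s)_{s\geqslant 0}$ be a standard Brownian motion started at $x/\sqrt2$ under $\pr_{x/\sqrt2}$, and set
$$M_s:=\exp\!\Big(-\int_0^s c(\sqrt2\,W_r)\,dr\Big),\qquad Z_s:=u\big(t-s,\sqrt2\,W_s\big)\,M_s,\qquad 0\leqslant s\leqslant t.$$
Applying It\^o's formula to $Z_s$, and using that $s\mapsto M_s$ has finite variation with $dM_s=-c(\sqrt2 W_s)M_s\,ds$ so that no cross term appears, the $ds$-terms collapse to $M_s\big(-\partial_t u+\Delta u-c\,u\big)(t-s,\sqrt2 W_s)\,ds$, which vanishes because $u$ solves the Schr\"odinger equation; hence $dZ_s=\sqrt2\,M_s\,\nabla u(t-s,\sqrt2 W_s)\cdot dW_s$, so $Z$ is a local martingale.

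The second step is to upgrade $Z$ to a genuine martingale so that $\E_{x/\sqrt2}[Z_t]=Z_0$. Under Assumption \ref{Assumption for potential} (and, after the reduction in the Remark following it, assuming $c\geqslant 0$) one has $0\leqslant M_s\leqslant 1$; combining this with the standing regularity and decay hypotheses on $u$, the stopped process $(Z_{s\wedge\tau_R})_{s\leqslant t}$ is a bounded martingale, where $\tau_R$ is the exit time of the ball $B_R$, and letting $R\to\infty$ by dominated convergence gives $\E_{x/\sqrt2}[Z_t]=\E_{x/\sqrt2}[Z_0]$. Since $W_0=x/\sqrt2$ is deterministic, $Z_0=u(t,x)$, while $Z_t=u\big(0,\sqrt2 W_t\big)M_t=u_0(\sqrt2 W_t)\exp\!\big(-\int_0^t c(\sqrt2 W_r)\,dr\big)$, which yields the second displayed representation
$$u(t,x)=\E_{x/\sqrt2}\!\left[u_0(\sqrt2 W_t)\,e^{-\int_0^t c(\sqrt2 W_r)\,dr}\right].$$

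For the first representation one repeats the argument with the scaling of the Example, taking $Z_s=\tfrac12 u\big(2(t-s),2W_s\big)\exp\!\big(-2\int_0^s c(2W_r)\,dr\big)$; alternatively, and more quickly, the two forms are identified by the Brownian scaling $\widetilde W_\rho:=\sqrt2\,W_{\rho/2}$, which is again a standard Brownian motion and under which $2W_{t/2}=\sqrt2\,\widetilde W_t$, $\ 2\int_0^{t/2}c(2W_r)\,dr=\int_0^t c(\sqrt2\,\widetilde W_\rho)\,d\rho$, and $W_0=x/2$ corresponds to $\widetilde W_0=x/\sqrt2$. Finally, to obtain the kernel form $u(t,x)=\int_{\R^N}K(x,y,t)u_0(y)\,dy$ one disintegrates the expectation over the endpoint $\sqrt2 W_t$: conditioning on $\sqrt2 W_t=y$ turns the path into a Brownian bridge, and
$$K(x,y,t)=(4\pi t)^{-N/2}e^{-\|x-y\|^2/(4t)}\;\E\!\left[e^{-\int_0^t c(\sqrt2 W_r)\,dr}\,\Big|\,\sqrt2 W_0=x,\ \sqrt2 W_t=y\right]$$
is the desired (positive, jointly continuous) Feynman--Kac kernel, which is finite since $c\geqslant 0$.

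The main obstacle is the second step: because the lemma only posits a classical solution $u\in C^{1,2}(\R_+\times\R^N)$ with bounded initial data, promoting the local martingale $Z$ to a true martingale requires an a priori bound on $u$ over $[0,t]\times\R^N$ (or at least a one-sided or polynomial growth control), which is exactly where Assumptions \ref{Assumption for potential}--\ref{Assumption for initial} together with the standing decay hypothesis on the solution enter. A clean alternative that avoids any growth estimate is to first verify directly that the right-hand side $\widetilde u(t,x):=\E_{x/\sqrt2}[u_0(\sqrt2 W_t)M_t]$ is $C^{1,2}$, solves the Schr\"odinger equation, and has initial value $u_0$, and then to invoke the uniqueness part of Theorem \ref{Thm:Existence and Uniqueness to the Cauchy Problem} to conclude $u=\widetilde u$. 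Everything else --- the It\^o computation and the Brownian-scaling identification of the two representations --- is routine.
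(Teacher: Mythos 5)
Your proof follows essentially the same route as the paper's: apply It\^o's formula to $Z_s = u\bigl(t-s,\sqrt{2}W_s\bigr)e^{-\int_0^s c(\sqrt{2}W_r)\,dr}$, use the PDE to annihilate the drift, and equate $\E[Z_0]=\E[Z_t]$ --- indeed you are more careful than the paper, which simply asserts the martingale property without the localization/dominated-convergence step, and your Brownian-scaling identification of the two representations and the bridge formula for $K$ are consistent with what the paper leaves implicit. One caution about your proposed fallback: invoking the uniqueness part of Theorem \ref{Thm:Existence and Uniqueness to the Cauchy Problem} to conclude $u=\widetilde u$ would be circular in the paper's logical structure, since that theorem is itself established by combining this lemma with Lemmas \ref{Continuity of the Feynman-Kac solution} and \ref{Differentiability of the Feynman-Kac solution}; you would instead need an independent uniqueness statement (e.g.\ a maximum-principle argument in the class of bounded solutions).
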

\begin{proof}
   Define
\begin{equation*}
\begin{aligned}
        Z_s&:=u(2(t-s),2W_s)e^{-2\int_{0}^{s}c(2W_r) dr},  0\leqslant s\leqslant t\\
    Y_s&:=e^{-2\int_{0}^{s}c(2W_r) dr}, 0\leqslant s\leqslant t.
\end{aligned}
\end{equation*}
By It\^o's formula, we have
\begin{equation*}
    \begin{aligned}
        dZ_s&=Y_s\left(-2\partial_t u ds+2\partial_i u dW_s^i+\dfrac{1}{2}\partial_{ij}u d\left<2W_s^i, 2W_s^j\right>\right)
        + u\cdot \left[
-2Y_s\cdot c\left( 2W_s \right) \,\,ds + M_i(W_s^i)dW_s^i+\dfrac{1}{2}d\left<Y_s^i,Y_s^j\right>
\right]\\
&\;\;\;+d\left<u(2(t-s),2W_s), Y_s\right>\\
&\xlongequal{(i)}2Y_s\left(
-\partial _tu+\Delta u+c\left(2W_s\right)u
\right)ds+u\cdot M_i(W_s^i)dW_s^i+0\\
&=u\cdot M_i(W_s^i)dW_s^i
    \end{aligned}
\end{equation*}
where (\textit{i}) holds since $Y_s$ is a bounded variation process, and $M_i$ is the partial derivative of $Y_s$ with respect to $x_i$. Hence $Z_s$ is a martingale. By the properties of martingales,
\begin{equation*}
    \begin{aligned}
        \E[Z_t]&=\E[Z_0]=\E[u(2t,2W_0)]=u(2t,2x)\\
        \E[Z_t]&=\E_x\left[
        u(0,2W_t)e^{-2\int_{0}^{t}c(2W_r)dr}
        \right]\\
        &=\E_x\left[
        u_0(2W_t)e^{-2\int_{0}^{t}c(2W_r)dr}
        \right]\\
        u(t,x)&=\E_{x/2}\left[
        u_0(2W_{t/2})e^{-2\int_{0}^{t/2}c(2W_r)dr}
        \right]\\
        &\equiv\int_{\R^N} K(x,y,t)u_0(y) dy.
    \end{aligned}
\end{equation*}
In this context, the subscript of $\E$ represents the starting point of Brownian motion and 
$K(x,y,t)$ is a fixed kernel function. This kernel serves a role analogous to that of the heat kernel in the resolution of heat flow problems. In addition, if we take 
$$Z_t:=u(t-s, \sqrt{2}W_s)e^{-\int_{0}^{s} c(\sqrt{2}W_r)}dr$$
then by the same way we can obtain
\begin{equation*}
    \begin{aligned}
         u(t,x)&=\E_{x/\sqrt{2}}\left[
        u_0(\sqrt{2}W_{t})e^{-\int_{0}^{t}c(\sqrt{2}W_r)dr}
        \right].
    \end{aligned}
\end{equation*}
\end{proof}

\section{
Proof of Theorem \ref{Thm:Existence and Uniqueness to the Cauchy Problem}}
\label{appendix: Existence and Uniqueness to the Cauchy Problem}

In this context, we consider to prove the existence and uniqueness of a classical solution to a class of equations:
\begin{equation}
\label{Cauchy Problem in the entire space}
    \begin{aligned}
    \begin{cases}
         -\partial_t u +\Delta u - c(x)u=0, & (t,x)\in \R_+\times \R^N\\
         u(0,x)=u_0(x), & u_0\in C_b(\R^N)
    \end{cases}
    \end{aligned}
\end{equation}
where $c(x)$ is a nonnegative convex smooth function. Firstly we consider the Cauchy problem in a compact domain with smooth boundary:
\begin{equation}
\label{eq: Cauchy problem in a bounded domain}
    \begin{aligned}
    \begin{cases}
         -\partial_t u +\Delta u - c(x)u=0, & (t,x)\in \R_+\times \Omega\\
         u(0,x)=u_0(x), & \text{for }x \in \overline{\Omega}
    \end{cases}
    \end{aligned}
\end{equation}

By \textit{Theorem 10} in Chapter 1 of \cite{friedman2008partial}, it is established that the equation \eqref{eq: Cauchy problem in a bounded domain} has a unique classical solution and the kernel function $K_{\Omega}(x,y,t)$ in \eqref{kernel function in a bounded domain} is bounded in $\Omega$, which equals to the fundamental solution taken $\tau=0$. Then we prove the existence and uniqueness of a classical solution to \eqref{Cauchy Problem in the entire space} by utilizing the method in \cite{rubio2011existence}.
\begin{lemma}\textit{(Continuity of the Feynman-Kac solution)}
\label{Continuity of the Feynman-Kac solution}
    Let $\displaystyle v(t,x):=\E_{x/2}\left[
    u_0(2W_{t/2})e^{-2\int_{0}^{t/2}c(2W_r)dr}
        \right]$ which is define in the proof of Lemma \ref{stochastic analysis_FP flow}, then $v(t,x)$ is continuous in $[0,\infty)\times \R^N$.
\end{lemma}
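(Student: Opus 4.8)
The plan is to realize $v(t,x)$ as the expectation of a single functional of one Brownian motion and then pass to the limit by dominated convergence, exploiting that the nonnegativity of $c$ forces the exponential weight into $[0,1]$. First I would let $(B_s)_{s\ge 0}$ be a standard Brownian motion in $\R^N$ started at the origin; by translation invariance, under $\pr_{x/2}$ the process $W_s$ has the law of $x/2+B_s$, so that
\begin{equation*}
v(t,x)=\E\!\left[u_0\big(x+2B_{t/2}\big)\exp\!\Big(-2\!\int_0^{t/2}c\big(x+2B_r\big)\,dr\Big)\right].
\end{equation*}
Since $c\ge 0$, the exponential factor lies in $(0,1]$, so the integrand is bounded in absolute value by the constant $\|u_0\|_\infty<\infty$; in particular $v$ is well defined with $|v|\le\|u_0\|_\infty$ on $[0,\infty)\times\R^N$.

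Next I would fix $(t_0,x_0)\in[0,\infty)\times\R^N$ together with an arbitrary sequence $(t_n,x_n)\to(t_0,x_0)$ and show $v(t_n,x_n)\to v(t_0,x_0)$, which gives joint continuity. For $\pr$-a.e.\ $\omega$ the path $r\mapsto B_r(\omega)$ is continuous, hence bounded on $[0,1]$ by some $R(\omega)$, and then by continuity of $c$ there is $C(\omega)<\infty$ with $c\le C(\omega)$ on the closed ball of radius $2R(\omega)+1$ about $x_0$. Consequently, for all large $n$ (so that $t_n\le 2$ and $|x_n-x_0|\le 1$) one has $c(x_n+2B_r(\omega))\le C(\omega)$ for $r\in[0,t_n/2]$; splitting $\int_0^{t_n/2}c(x_n+2B_r)\,dr-\int_0^{t_0/2}c(x_0+2B_r)\,dr$ into the contribution on the fixed interval $[0,t_0/2]$ (which vanishes by continuity of $c$ and dominated convergence there) and the contribution on the symmetric difference of the two intervals (bounded by $C(\omega)\,|t_n-t_0|/2$), I obtain $\int_0^{t_n/2}c(x_n+2B_r)\,dr\to\int_0^{t_0/2}c(x_0+2B_r)\,dr$ when $t_0>0$, and $\int_0^{t_n/2}c(x_n+2B_r)\,dr\le C(\omega)\,t_n/2\to 0$ when $t_0=0$. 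Together with $x_n+2B_{t_n/2}(\omega)\to x_0+2B_{t_0/2}(\omega)$ (path continuity) and the continuity of $u_0$, this shows the integrand converges $\pr$-a.s.\ to $u_0(x_0+2B_{t_0/2})\exp(-2\int_0^{t_0/2}c(x_0+2B_r)\,dr)$ if $t_0>0$ and to the constant $u_0(x_0)$ if $t_0=0$. Since these integrands are uniformly bounded by $\|u_0\|_\infty$, the dominated convergence theorem gives $v(t_n,x_n)\to v(t_0,x_0)$ in the first case and $v(t_n,x_n)\to u_0(x_0)=v(0,x_0)$ in the second, so $v$ is continuous on all of $[0,\infty)\times\R^N$, including up to $\{t=0\}$.

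I expect the only genuinely delicate points to be, first, the pathwise domination of $c$ along Brownian trajectories — but this requires nothing beyond continuity of $c$, because Brownian paths are bounded on compact time intervals, so no growth condition on $c$ (nor its convexity) enters this lemma — and, second, the boundary case $t_0=0$, where one must verify that the time-integral over the shrinking window $[0,t_n/2]$ disappears and that $v$ matches the prescribed datum $u_0$. The $t_0=0$ case is the main obstacle in the sense that it is where the Feynman--Kac formula is connected to the initial condition; the interior continuity ($t_0>0$) is then a routine dominated-convergence argument.
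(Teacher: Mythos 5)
Your argument is correct and rests on the same two pillars as the paper's own proof: the coupling $W_x(r)=x+B_r$, which realizes all the expectations $v(t_n,x_n)$ on a single probability space, and a dominated-convergence passage to the limit made possible by $c\geqslant 0$ and $u_0\in C_b(\R^N)$. The execution, however, is genuinely leaner. The paper follows Rubio's scheme: it proves uniform integrability of the differences $Y_n$ via an $L^2$ bound, splits the expectation over events $B_1$ (Brownian sup-norm bounded by $M$) and $B_2$ (nearby paths and potential values within $\eta$), and then applies dominated convergence separately to the two resulting terms $I_n$ and $J_n$. You instead observe that the exponential weight lies in $(0,1]$, so the whole integrand is dominated by the constant $\left\Vert u_0\right\Vert_\infty$, and a single almost-sure pointwise-convergence argument along the path (using only path continuity and local boundedness of the smooth $c$ on the range of the path) plus the dominated convergence theorem finishes the proof; moreover you treat the boundary case $t_0=0$ explicitly, which the paper dispatches with the remark that one "can proceed the same way." Once the bound $e^{-2\int c}\leqslant 1$ is in hand, the uniform-integrability machinery and the event decomposition are indeed unnecessary, so your route is a clean simplification rather than a different idea. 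One small repair: you bound the Brownian path on $[0,1]$ and then require $t_n\leqslant 2$, which only covers limit times $t_0<2$; replace $[0,1]$ by $[0,t_0/2+1]$ (and the condition $t_n\leqslant 2$ by $t_n\leqslant t_0+2$) so that the same pathwise constant $C(\omega)$ dominates $c$ along the relevant time window for an arbitrary $t_0\geqslant 0$.
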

\begin{proof}
   Here we define $\displaystyle W_x(r)\equiv W_r(x)$ as the Brownian motion starting from $x$ at time $r$. Then we rewrite $v(t,x)$ as 
   \begin{equation}
       \begin{aligned}
       \label{The Feynman-Kac solution}
           v(t,x)&=\E\left[
    u_0(2W_{x/2}(t/2))e^{-2\int_{0}^{t/2}c(2W_{x/2}(r))dr}
        \right].
       \end{aligned}
   \end{equation}
To simplify this function, we consider to prove the continuity of $\displaystyle v_1(t,x):=v(2t,2x)=\E\left[
    u_0(2W_{x}(t))e^{-2\int_{0}^{t}c(2W_{x}(r))dr}
        \right]$.
Let $\{\{(t_n,x_n)\}_{n\in\BN},(t,x)\}\subset (0,\infty)\times \R^N$. Assume that $(t_n,x_n)\xrightarrow[]{n\to\infty}(t,x)$. We need to prove that
$$v_1(t_n,x_n)\xrightarrow[]{n\to\infty}v_1(t,x)$$
that is, for any $\epsilon>0$, there exists an $N_1\in\BN$ such that for any $n\geqslant N_1$
$$|v_1(t_n,x_n)-v_1(t,x)|<\epsilon.$$
Let $\epsilon>0,0<\alpha \ll 1$ and $N_1\in \BN$
such that
$$\norm{(t_n,x_n)-(t,x)}<\alpha\text{ for }n\geqslant N_1.$$
Then if $n\geqslant N_1$ we get $t-\alpha<t_n<t+\alpha$ and $\norm{x}-\alpha<\norm{x_n}<\norm{x}+\alpha$. Note that $W_x(t)$ in $v_1(t,x)$ is the solution to this simplest stochastic differential equation:
\begin{equation}
\label{simplest SDE}
    \begin{aligned}
        &dX(t)=dW(t),\\
        &X(0)=x.
    \end{aligned}
\end{equation}
By \textit{Proposition 2.1} in \cite{rubio2011existence}, $W_x(t)\equiv W(t,x)$ is continuous a.s. in $(0,\infty)\times \R^N$. To prove the continuity, we define these random variables
\begin{equation*}
    \begin{aligned}
        Y_n:= e^{-2\int_{0}^{t} c(2W_x(r))dr}u_0(2W_x(t))- e^{-2\int_{0}^{t_n} c(2W_{x_n}(r))dr}u_0(2W_{x_n}(t_n)).
    \end{aligned}
\end{equation*}
These random variables are uniformly integrable since
\begin{equation}
    \begin{aligned}
        \int_{\R^N} Y_n^2 d\pr
        &\leqslant 2 \int_{\R^N} \left(
        e^{-2\int_{0}^{t} c(2W_x(r))dr}u_0(2W_x(t))
        \right)^2 d\pr+
        2 \int_{\R^N} \left(
       e^{-2\int_{0}^{t_n} c(2W_{x_n}(r))dr}u_0(2W_{x_n}(t_n))
        \right)^2 d\pr\\
        &\leqslant 4\sup\limits_{x\in\R^N} |u_0(x)| \int_{\R^N} d\pr\\
        &= 4\sup\limits_{x\in\R^N} |u_0(x)|<\infty.
    \end{aligned}
\end{equation}
It follows from \textit{Theorem 4.2 page 215} in \cite{gut2006probability} that $\{Y_n\}_{n\geqslant N_1}$ is uniformly integrable. Let $0<\eta<1$ and $M>0$. Define $M_1:=1+M, A_1:=[0,t+\alpha]\times [-M_1,M_1]^N, A_2:= [-M_1,M_1]^N$ and $\norm{\nu}_T:=\sup\limits_{0\leqslant s\leqslant T}\norm{\nu(s)}$ for a function $\nu: [0,\infty)\to\R^N, T>0$. As a consequence of uniformly integrable of $\{Y_n\}_{n\geqslant N_1}$, for any $\epsilon>0$ there exists a $\delta(\epsilon)>0$, such that for any measurable set $B$ with $\pr[B]<\delta(\epsilon)$, we have $\sup_n\int_{B}|Y_n|d\pr<\frac{\epsilon}{2}$. By the continuity of $W(t,x)$ and $c(x)$ and the property of not exploding in finite time a.s., we may choose $M>0$ such that 
$$\pr[\norm{W_x}_{t+\alpha}>M]\leqslant \dfrac{\delta(\epsilon)}{2}$$
and $N_2\in\BN$ for which 
\begin{equation*}
    \begin{aligned}
        \pr[\norm{W_{x_n}-W_{x}}_{t+\alpha}>\eta]\leqslant \dfrac{\delta(\epsilon)}{2},\quad 
        \pr[\norm{c(W_{x_n})-c(W_{x})}_{t+\alpha}>\eta]\leqslant \dfrac{\delta(\epsilon)}{2}.
    \end{aligned}
\end{equation*}
Let $B_1:=\{\norm{W_x}_{t+\alpha}\leqslant M\},B_2:=\{\norm{W_{x_n}-W_{x}}_{t+\alpha} \lor \norm{c(W_{x_n})-c(W_{x})}_{t+\alpha}\leqslant\eta\}$, where $A\lor B:=\max \{A,B\}, A\land B :=\min \{A,B\}$.
Then
\begin{equation}
    \begin{aligned}
        |v_1(t_n,x_n)-v_1(t,x)|&\leqslant
        \int_{B_1\cap B_2}|Y_n| d\pr+
    \int_{\R^N\backslash (B_1\cap B_2)}|Y_n| d\pr\\
    &\leqslant 
    \int_{B_1\cap B_2}|Y_n| d\pr+ \dfrac{\epsilon}{2}\\
    &=\int_{B_1\cap B_2} 
    \left|e^{-2\int_{0}^{t} c(2W_x(r))dr}u_0(2W_x(t))- e^{-2\int_{0}^{t_n} c(2W_{x_n}(r))dr}u_0(2W_{x_n}(t_n))\right| d\pr +\dfrac{\epsilon}{2}\\
    &\leqslant 
    \int_{B_1\cap B_2} e^{-2\int_{0}^{t} c(2W_x(r))dr}
    \left|u_0(2W_x(t))-u_0(2W_{x_n}(t_n))\right| d\pr\\
    &\;\;  +\int_{B_1\cap B_2} |u_0(2W_{x_n}(t_n))| 
    \left|
    e^{-2\int_{0}^{t} c(2W_x(r))dr}-
    e^{-2\int_{0}^{t_n} c(2W_{x_n}(r))dr}
    \right| d\pr+
\dfrac{\epsilon}{2}\\    &:=I_n+J_n+\dfrac{\epsilon}{2}.
    \end{aligned}
\end{equation}
By the Lebesgue dominated convergence theorem (LDCT) and nonnegativity of $c(x)$, we can find $N_3\in \BN$ such that for any $n\geqslant N_3$, we have $I_n<\frac{\epsilon}{4}$. We then estimate $J_n$:
\begin{equation*}
    \begin{aligned}
        J_n&\leqslant
        \sup\limits_{x\in\R^N}|u_0(x)| \int_{B_1\cap B_2}\left|
        e^{2\int_{0}^{t_n}c(2W_{x_n}(r))dr-2\int_{0}^{t}c(2W_{x}(r))dr}-1
        \right| d\pr.
    \end{aligned}
\end{equation*}
Since in $B_1\cap B_2$, 
\begin{equation*}
    \begin{aligned}
        q_n&:=\int_{0}^{t_n} c(2W_{x_n}(r)) dr-\int_{0}^{t}c(2W_x(r))dr\\
        &= \int_{0}^{t_n} c(2W_{x_n}(r)) dr-
        \int_{0}^{t}c(2W_{x_n}(r))dr+
        \int_{0}^{t}c(2W_{x_n}(r))dr-\int_{0}^{t}c(2W_x(r))dr\\
        &\leqslant \int_{t_n\lor t}^{t_n\land t} c(2W_{x_n}(r))dr+ \int_{0}^{t}c(2W_{x_n}(r))dr-\int_{0}^{t}c(2W_x(r))dr\\
        &\leqslant 2\alpha \sup\limits_{x\in A_2}c(x) + \int_{0}^{t} \left(
        c(2W_{x_n}(r))-c(2W_x(r))
        \right)dr.
    \end{aligned}
\end{equation*}
Then $q_n\xrightarrow[]{n\to\infty} 0$ and there exists an $N_4\in\BN$ such that $J_n<\frac{\epsilon}{4}$ by LDCT. Hence we prove that for any $n\geqslant \max\limits_{i=1,2,3,4}\{N_i\}$, we have $|v_1(t_n,x_n)-v_1(t,x)|<\epsilon$. Then $v_1(t,x)$ and $v(t,x)$ is continuous in $(0,\infty)\times \R^N$. For the continuity at $t = 0$ we can proceed the same way.  
\end{proof}
\begin{lemma}
    \label{Differentiability of the Feynman-Kac solution}\textit{(Differentiability of the Feynman-Kac solution)} Let $v$ defined as in equation \eqref{The Feynman-Kac solution}. Then $v\in C^{1,\infty}((0,\infty)\times \R^N)$.
\end{lemma}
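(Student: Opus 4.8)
The plan is to realize $v$ as a limit of smooth functions that all satisfy the same uniformly parabolic equation with smooth coefficients, and then to promote this to the claimed regularity by interior parabolic estimates. The key structural observation is that being $C^{1,\infty}$ is a local property, so it suffices to establish joint smoothness of $v$ on every space--time cylinder $Q=(t_0,t_1)\times B_R$ with $0<t_0<t_1$ and $B_R\subset\R^N$ a ball; this is exactly what lets us dispense with any hypothesis on the growth of $c$ at infinity.

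First I would introduce Dirichlet truncations. Fix an exhaustion $\Omega_n\uparrow\R^N$ by bounded smooth convex domains, let $\tau_n$ be the first exit time from $\Omega_n$ of the (scaled) Brownian path appearing in \eqref{The Feynman-Kac solution}, and set
\[
v_n(t,x):=\E_{x/2}\!\left[u_0(2W_{t/2})\,e^{-2\int_0^{t/2}c(2W_r)\,dr}\,\mathbf 1_{\{t/2<\tau_n\}}\right].
\]
By the Feynman--Kac representation of the Dirichlet--Cauchy problem \eqref{eq: Cauchy problem in a bounded domain} (with zero boundary data), $v_n$ coincides with the integral of $u_0$ against the solution kernel $K_{\Omega_n}$ of that problem, and by \textit{Theorem~10} in Chapter~1 of \cite{friedman2008partial} the corresponding fundamental solution is smooth on $\Omega_n\times\Omega_n\times(0,\infty)$ with Gaussian-type bounds on all of its $x$-derivatives; hence $v_n\in C^\infty((0,\infty)\times\Omega_n)$ and $v_n$ solves $\partial_t v_n=\Delta v_n-c\,v_n$ there. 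Since $|u_0|\le\|u_0\|_\infty$, $c\ge0$, and $\tau_n\to\infty$ almost surely, dominated convergence gives $v_n(t,x)\to v(t,x)$ for every $(t,x)$.

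The heart of the matter is then interior regularity. On a fixed cylinder $Q$ as above and for all large $n$, the function $v_n$ solves the \emph{same} uniformly parabolic equation $\partial_t w-\Delta w+c\,w=0$ with the \emph{fixed} smooth coefficient $c$, and $\sup_n\|v_n\|_{L^\infty(Q)}\le\|u_0\|_\infty$. Interior Schauder estimates, iterated (legitimate because $c\in C^\infty$), then bound $\|v_n\|_{C^{k}(Q')}$ uniformly in $n$ for every $k\in\BN$ and every $Q'\Subset Q$; thus $\{v_n\}$ is precompact in $C^{k-1}(Q')$, and since it already converges pointwise to $v$, the limit $v$ must lie in $C^{k-1}(Q')$ and solve $\partial_t v=\Delta v-c\,v$. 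As $k$ and $Q'$ are arbitrary, $v\in C^\infty((0,\infty)\times\R^N)$; in particular $v\in C^{1,\infty}((0,\infty)\times\R^N)$, and the identity $\partial_t v=\Delta v-c\,v$ displays the single $t$-derivative in terms of the smooth $x$-derivatives.

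I expect the main obstacle to be this last step when the potential $c$ is unbounded — for instance the quadratic $c$ coming from an Ornstein--Uhlenbeck potential — since then there is no single global parametrix on $\R^N$ to differentiate. This is exactly where the localization built into the plan pays off: on each fixed cylinder $c$ is bounded and smooth and the operator is uniformly parabolic, so the bounded-domain theory of \cite{friedman2008partial} applies and the resulting estimates patch together. The only remaining soft point is identifying the $C^{k-1}$-limit of the (sub)convergent family $\{v_n\}$ with $v$, which the pointwise convergence from dominated convergence already settles; one may alternatively upgrade to locally uniform convergence using the a priori continuity of $v$ from Lemma~\ref{Continuity of the Feynman-Kac solution} together with exponential tail bounds on the exit times $\tau_n$.
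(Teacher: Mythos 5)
Your proposal is essentially correct, but it proves the lemma by a genuinely different route than the paper. The paper fixes a bounded cylinder $[0,T]\times A$, solves the Dirichlet--Cauchy problem there with lateral and initial data given by $v$ itself (this is where Lemma~\ref{Continuity of the Feynman-Kac solution} is needed), represents that solution probabilistically via the exit time $\tau$, and then shows $w=v$ by conditioning on $\mathcal{F}_\tau$ and the strong Markov property; the upgrade from $C^{1,2}$ to $C^{1,\infty}$ is done at the end through the Duhamel integral representation against the Gaussian kernel. You instead approximate $v$ by the killed (zero lateral data) Feynman--Kac solutions $v_n$ on an exhaustion $\Omega_n\uparrow\R^N$, use the uniform bound $\sup_n\|v_n\|_{L^\infty}\leqslant\|u_0\|_\infty$ (valid since $c\geqslant 0$) together with interior Schauder bootstrapping on fixed cylinders where $c$ is smooth and bounded, and identify the $C^k$-limit with $v$ by dominated convergence since $\tau_n\to\infty$ a.s. This buys you full joint $C^\infty$ regularity in $(t,x)$ (stronger than the stated $C^{1,\infty}$), bypasses the strong Markov computation, and does not logically depend on the continuity lemma; the paper's route, by contrast, directly exhibits $v$ as the classical solution of a boundary value problem and stays closer to the cited literature (Rubio, Friedman). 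Two points you should make explicit to close the argument: (i) the identification of the killed expectation $v_n$ with the integral of $u_0$ against the Dirichlet kernel $K_{\Omega_n}$ is itself a Feynman--Kac-with-stopping-time statement (It\^o's formula plus optional stopping applied to the classical Dirichlet solution), i.e.\ the same kind of step the paper carries out on the cylinder, so it should be argued or cited rather than asserted; and (ii) when $u_0$ does not vanish on $\partial\Omega_n$ there is a corner incompatibility at $\{0\}\times\partial\Omega_n$, which is harmless for the interior estimates you use but is most cleanly avoided by multiplying $u_0$ by a cutoff vanishing near $\partial\Omega_n$ (the pointwise convergence $v_n\to v$ survives unchanged). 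With those details spelled out, your argument is a valid and somewhat more robust alternative proof.
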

\begin{proof}
    Let $T>0$. Consider the following parabolic differential equation in an open bounded domain $A\subset \R^N$ with $C^2$ boundary:
    \begin{equation}
    \label{eq: pde of parabolic type in a bounded domain}
        \begin{aligned}
            \begin{cases}
                -\partial_t u+\Delta u -c(x) u =0, &(t,x)\in [0,T]\times A\\
                u(0,x)=v(0,x), &x\in A\\
                u(t,x)=v(t,x), & (t,x)\in (0,T]\times A
            \end{cases}
        \end{aligned}
    \end{equation}
From the continuity of $v$, the existence and uniqueness of a classical solution to equation \eqref{eq: pde of parabolic type in a bounded domain} follows from \cite{friedman2008partial}. Define the following stopping time
$$\tau:=\inf \{s>0| W_x(s)\notin \overline{A}\}.$$
Following the same arguments of \textit{Theorem 2.3} in \cite{rubio2011existence} and Section 5 in Chapter 6 of \cite{friedman1975stochastic}, we can prove that the classical solution to \eqref{eq: pde of parabolic type in a bounded domain} has the following representation:
\begin{equation}
    \begin{aligned}
        w(t,x)=\E_{x/\sqrt{2}}\left[
        v(t-\tau, \sqrt{2}W(\tau)) e^{-\int_{0}^{\tau} c(\sqrt{2}W(r))dr})
        \right].
    \end{aligned}
\end{equation}
Using the strong Markov property of the process $W_x(s)$, we prove the equality between $v$ and $w$. Consider the filtration $\mathcal{F}_{\tau}$. For $v(t,x)$ we use the property of conditional expectation to get:
\begin{equation}
    \begin{aligned}
        v(t,\sqrt{2}x)&=\E_x \left[
        u_0(\sqrt{2}W(t)) e^{-\int_{0}^{t} c(\sqrt{2}W(r))dr}
        \right]\\
        &=\E_x\left[
        \E\left[\left.
         u_0(\sqrt{2}W(t)) e^{-\int_{0}^{t} c(\sqrt{2}W(r))dr}\right|\mathcal{F}_\tau
        \right]
        \right]\\
        &=\E_x\left[e^{-\int_{0}^{\tau}c(\sqrt{2}W(r))dr}
        \E\left[\left.
         e^{-\int_{0}^{t-\tau} c(\sqrt{2}W(r+\tau))dr}u_0(\sqrt{2}W(t-\tau+\tau))\right|\mathcal{F}_\tau
        \right]
        \right]\\
        &=\E_x\left[e^{-\int_{0}^{\tau}c(\sqrt{2}W(r))dr}
        \E_{\sqrt{2}W(\tau)}\left[
         e^{-\int_{0}^{t-\tau} c(\sqrt{2}W(r))dr}u_0(\sqrt{2}W(t-\tau))
        \right]
        \right]\\
        &=\E_x\left[e^{-\int_{0}^{\tau}c(\sqrt{2}W(r))dr}
        v(t-\tau, \sqrt{2}W(\tau))
        \right]\\
        &=w(t,\sqrt{2}x).
    \end{aligned}
\end{equation}
Hence $v(t,x)\in C^{1,2}([0,T]\times A)$ and it satisfies the parabolic equation:
$\partial_t v=\Delta v-c(x)v$. Since $T>0$ and the set $A$ are arbitrary, we get that $v(t,x)\in C^{1,2}(\R_+\times \R^N)$. Furthermore, by the expression of heat equations we have:
$$u(x,t)=\dfrac{1}{(2\sqrt{\pi t})^N}\int_{\R^N} u_0(y) e^{-\frac{\norm{y-x}^2}{4t}}dy-\dfrac{1}{(2\sqrt{\pi})^N}\int_{0}^{t}\int_{\R^N}\dfrac{c(y)u(y,\tau)}{(\sqrt{t-\tau})^N}e^{-\frac{\norm{y-x}^2}{4(t-\tau)}}dyd\tau$$
which implies that $u\in C^{1,\infty}(\R_+\times \R^N)$.
\end{proof}

By combining Lemma \ref{stochastic analysis_FP flow}, Lemma \ref{Continuity of the Feynman-Kac solution} and Lemma \ref{Differentiability of the Feynman-Kac solution}, we can prove the uniqueness and existence of solutions to Fokker-Planck flow \eqref{FPflow} and the Cauchy problem \eqref{schrodinger2}.

\section{Proof of Lemma \ref{derivatives of entropy}}
\label{appendix: derivatives of entropy}
According to the theory of gradient flow and optimal transport, we can interpret the FP flow as the gradient flow $\dot{\mu}=-\grad_{\mu}H_{\nu}$ of relative entropy
$$H_{\nu}(\mu)=\int_{\R^N}\mu\log\dfrac{\mu}{\nu} dx$$
where the reference measure $\nu=\mu_{\infty}=e^{-V}$. Then we have
\begin{equation*}
    \begin{aligned}
        \dfrac{d}{dt}H_{\nu}(\mu_t)&=\left<\grad_{\mu}H_{\nu},\dfrac{\partial \mu_t}{\partial t}\right>\\
        &=-\norm{\grad_{\mu} H_{\nu}}^2\\
        &=-\int_{\R^N} \mu_t\norm{\nabla \log\dfrac{\mu_t}{\nu}}^2 dx\\
        &=-J_{\nu}(\mu_t).
    \end{aligned}
\end{equation*}

Similarly, the second derivative of relative entropy along the gradient flow can be given by the Hessian operator:
\begin{equation*}
    \begin{aligned}
        \dfrac{d^2}{dt^2}H_{\nu}(\mu_t)&=
        -\dfrac{d}{dt}\norm{\grad_{\mu}H_{\nu}}^2
        \\
        &=
        2(\Hess_{\mu}H_{\nu})(\grad_{\mu}H_{\nu}).
    \end{aligned}
\end{equation*}
By the Formula 15.7 or Theorem 24.2 in \cite{villani2009optimal}, this second derivative becomes the following equation
\begin{equation*}
    \begin{aligned}
        \frac{d^2}{dt^2}H_{\nu}\left( \mu _t \right)&=2K_{\nu}\left( \mu _t \right) +2G_{\nu}(\mu_t)
    \end{aligned}
\end{equation*}
where
\begin{equation*}
    \begin{aligned}
       K_{\nu}(\mu)&=\int_{\R^N}\mu \HS{\nabla^2 \log\dfrac{\mu}{\nu}}^2 dx\\
       G_{\nu}(\mu)&=
\int_{\mathbb{R} ^N}{\mu}\left< \left( \nabla ^2V \right) \nabla \log \frac{\mu}{\nu}, \nabla \log \frac{\mu}{\nu} \right> dx.
    \end{aligned}
\end{equation*}
In particular, if $\mu_0$ is a point mass at $x_0\in\R^N$, we have
\begin{equation*}
    \begin{aligned}
        \dfrac{d}{dt}H_{\nu}(\mu_t|\mu_0=x_0)
        &=-J_{\nu}(\mu_t|\mu_0=x_0)\\
         \frac{d^2}{dt^2}H_{\nu}\left( \mu _t |\mu_0=x_0\right)&=2K_{\nu}\left( \mu _t|\mu_0=x_0 \right) +2G_{\nu}(\mu_t|\mu_0=x_0).
    \end{aligned}
\end{equation*}

Hence for any initial distribution $\mu_0$, we take the integration over $\mu_0$ to obtain
\begin{equation*}
    \begin{aligned}
        \dfrac{d}{dt}H_{\nu}(\mu_t|\mu_0)
        &=-J_{\nu}(\mu_t|\mu_0)\\
         \frac{d^2}{dt^2}H_{\nu}\left( \mu _t |\mu_0\right)&=2K_{\nu}\left( \mu _t|\mu_0 \right) +2G_{\nu}(\mu_t|\mu_0).
    \end{aligned}
\end{equation*}

\section{Proof of Theorem \ref{derivatives of mutual information}}
\label{appendix: derivatives of mutual information}
The proof of Theorem \ref{derivatives of mutual information} is based on Lemma \ref{expansion of Kv} and \ref{integration by parts}.
\begin{lemma}\label{expansion of Kv}
    For any joint distribution $(\mu_0,\mu_t)$ and any reference probability measure $\nu$,
    \begin{equation*}
        \begin{aligned}
            K_{\nu}&(\mu_0;\mu_t)=
\Psi \left( \mu _0|\mu _t \right) +
2\int_{\mathbb{R} _{t}^{N}}{\mu _t\left< \nabla ^2\log \frac{\mu _t}{\nu}, \int_{\mathbb{R} _{0}^{N}}{\mu _{0|t}}\nabla ^2\log \,\mu _{0|t}dx_0 \right> _{HS}}dx_t
        \end{aligned}
    \end{equation*}
\end{lemma}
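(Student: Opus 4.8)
The statement is a pointwise-in-time algebraic identity valid for an arbitrary joint law $\mu_{0,t}$: neither the Fokker--Planck equation nor any time derivative is involved, only the definition of the mutual version, Bayes' rule, and Fubini. \textbf{Step 1 (unfold the mutual version).} By definition $K_\nu(\mu_0;\mu_t)=K_\nu(\mu_{t|0})-K_\nu(\mu_t)$, where, with every Hessian taken in the $x_t$ variable only,
\[
K_\nu(\mu_{t|0})=\int_{\R^N_0}\mu_0(x_0)\,K_\nu\!\big(\mu_{t|0}(\cdot|x_0)\big)\,dx_0=\int_{\R^{2N}}\mu_{0,t}\,\HS{\nabla^2\log(\mu_{t|0}/\nu)}^2\,dx_0\,dx_t ,
\]
and, inserting $\int_{\R^N_0}\mu_{0|t}\,dx_0=1$, also $K_\nu(\mu_t)=\int_{\R^{2N}}\mu_{0,t}\,\HS{\nabla^2\log(\mu_t/\nu)}^2\,dx_0\,dx_t$.

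\textbf{Step 2 (Bayes decomposition of the Hessian).} From $\mu_{0,t}=\mu_0\,\mu_{t|0}=\mu_t\,\mu_{0|t}$ one gets $\log(\mu_{t|0}/\nu)=\log(\mu_t/\nu)+\log\mu_{0|t}-\log\mu_0$; since $\log\mu_0$ does not depend on $x_t$, applying $\nabla^2_{x_t}$ gives the additive splitting $\nabla^2\log(\mu_{t|0}/\nu)=\nabla^2\log(\mu_t/\nu)+\nabla^2\log\mu_{0|t}$. \textbf{Step 3 (expand and reorganize).} Substitute this into the integrand for $K_\nu(\mu_{t|0})$ and expand $\HS{A+B}^2=\HS{A}^2+2\langle A,B\rangle_{HS}+\HS{B}^2$ with $A=\nabla^2\log\mu_{0|t}$ and $B=\nabla^2\log(\mu_t/\nu)$. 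Integrating over $x_0$ with $\int\mu_{0|t}\,dx_0=1$, the $\HS{B}^2$ piece equals $K_\nu(\mu_t)$ and cancels the subtracted term; the $\HS{A}^2$ piece is the second-order backward Fisher information $\Psi(\mu_0|\mu_t)$; and in the cross term, since $B$ is independent of $x_0$, Fubini together with $\mu_{0,t}=\mu_t\,\mu_{0|t}$ yields $2\int_{\R^N_t}\mu_t\big\langle\nabla^2\log(\mu_t/\nu),\int_{\R^N_0}\mu_{0|t}\,\nabla^2\log\mu_{0|t}\,dx_0\big\rangle_{HS}dx_t$, which is exactly the remaining term of the claim.

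\textbf{Expected main obstacle.} The algebra is essentially one line; the genuine content is analytic bookkeeping. First one must pin down the normalization: the Bayes splitting most naturally produces the bare Hessian $\nabla^2\log\mu_{0|t}$, so one has to confirm that its $\mu_{0,t}$-average coincides with $\Psi(\mu_0|\mu_t)$ as defined, which forces careful tracking of the factors of $\nu=e^{-V}$ (equivalently of the $\nabla^2V$ contributions, since $\nabla^2\log\nu=-\nabla^2V$). Second, one must justify the interchange of the order of integration and the finiteness of each term; this is where the standing smoothness and rapid-decay hypotheses on the solution, together with strict positivity of the densities, enter, guaranteeing that $\nabla^2_{x_t}\log\mu_{0|t}$ and $\nabla^2_{x_t}\log\mu_t$ exist and have Hilbert--Schmidt norms integrable against $\mu_{0,t}$.
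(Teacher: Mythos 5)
Your proposal is correct and follows essentially the same route as the paper: unfold the mutual version, use the Bayes splitting $\nabla^2_{x_t}\log(\mu_{t|0}/\nu)=\nabla^2_{x_t}\log(\mu_t/\nu)+\nabla^2_{x_t}\log\mu_{0|t}$, expand the Hilbert--Schmidt square so the $\left\| \nabla^2\log(\mu_t/\nu)\right\|_{\mathrm{HS}}^2$ piece cancels $K_{\nu}(\mu_t)$, and pull the $x_0$-integration inside the cross term via Fubini, exactly as in the appendix. The normalization issue you flag about $\Psi(\mu_0|\mu_t)$ is genuine but lies in the paper's notation rather than in your argument: the paper's displayed definition carries $\mu_{0|t}/\nu$, yet its own expansion (like yours) produces the bare Hessian $\nabla^2_{x_t}\log\mu_{0|t}$, and the two agree only when $\nabla^2 V=0$, so $\Psi$ must be read as the bare backward second-order quantity, which is precisely how you use it.
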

\begin{proof}
    We have the following decomposition:
\begin{equation*}
    \begin{aligned}
\,\,\nabla ^2\log \frac{\mu _{t|0}}{\nu}=\nabla ^2\log \frac{\mu _t}{\nu}+\nabla ^2\log \,\mu _{0|t}.
    \end{aligned}
\end{equation*}
Then we obtain
\begin{equation*}
    \begin{aligned}
K_{\nu}\left( \,\mu _0;\mu _t \right) &=K_{\nu}\left( \mu _t|\mu _0 \right) -K_{\nu}\left( \mu _t \right) \\
&=
\int_{\mathbb{R} _{0}^{N}}{\int_{\mathbb{R} _{t}^{N}}{\mu _{t,0}}\left( \left\| \nabla ^2\log \frac{\mu _{t|0}}{\nu} \right\| _{HS}^{2}-\left\| \nabla ^2\log \frac{\mu _t}{\nu} \right\| _{HS}^{2} \right)}\,\,dx_tdx_0\\
&=
\int_{\mathbb{R} _{0}^{N}}{\int_{\mathbb{R} _{t}^{u}}{\mu _{t,0}}\left\| \nabla ^2\log \frac{\mu _{0|t}}{\nu} \right\| _{HS}^{2}}\,\,dx_tdx_0 +2\int_{\mathbb{R} _{0}^{N}}{\int_{\mathbb{R} _{t}^{u}}{\mu _{t,0}}\left< \nabla ^2\log \frac{\mu _t}{\nu}, \nabla ^2\log \,\mu _{0|t} \right> _{HS}}\,\,dx_tdx_0\\
&=
\Psi \left( \mu _0|\mu _t \right)  +2\int_{\mathbb{R} _{t}^{N}}{\mu _t\int_{\mathbb{R} _{0}^{N}}{\mu _{0|t}}\left< \nabla ^2\log \frac{\mu _t}{\nu}, \nabla ^2\log \,\mu _{0|t} \right> _{HS}}\,\,dx_tdx_0\\
&=
\Psi \left( \mu _0|\mu _t \right) +2\int_{\mathbb{R} _{t}^{N}}{\mu _t\left< \nabla ^2\log \frac{\mu _t}{\nu}, \int_{\mathbb{R} _{0}^{N}}{\mu _{0|t}}\nabla ^2\log \,\mu _{0|t}dx_0 \right> _{HS}}\,\,dx_t
    \end{aligned}
\end{equation*}
where $$
\Psi \left( \mu _0|\mu _t \right) :=\int_{\mathbb{R} _{0}^{N}}{\int_{\mathbb{R} _{t}^{N}}{\mu _{t,0}}\left\| \nabla ^2\log \frac{\mu _{0|t}}{\nu} \right\| _{HS}^{2}}dx_tdx_0.
$$
\end{proof}

\begin{lemma}\label{integration by parts}
    For any distribution $\rho_t$ which is independent of $x_0$, any joint distribution $(\mu_0,\mu_t)$ and any reference probability measure $\nu$, we have the following formula of integration by parts:
    \begin{equation*}
        \begin{aligned}
\int_{\mathbb{R} _{t}^{N}}\mu _t&\int_{\mathbb{R} _{0}^{N}}{\mu _{0|t}}\left< \rho_t , \nabla \log \mu _{0|t}\left( \nabla \log \mu _{0|t} \right) ^{\top} \right> _{HS}\,\,dx_tdx_0
=
-\int_{\mathbb{R} _{t}^{N}}{\mu _t}\int_{\mathbb{R} _{0}^{N}}{\mu _{0|t}}\left< \rho_t , \nabla ^2\log \mu _{0|t} \right> _{HS}dx_tdx_0.
        \end{aligned}
    \end{equation*}
\end{lemma}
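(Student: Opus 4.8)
The plan is to reduce the identity to the elementary observation that $\nabla_{x_t}^2$ annihilates the function $x_t\mapsto\int_{\R^N_0}\mu_{0|t}(x_0\mid x_t)\,dx_0\equiv 1$. Throughout, all gradients and Hessians act in the $x_t$ variable (consistent with the decomposition used in Lemma \ref{expansion of Kv}), and the standing smoothness and rapid-decrease hypotheses on solutions of the FP flow from Section \ref{Existence and Uniqueness of Classical Solutions to FP Flow} are what legitimize differentiating under the integral sign and applying Fubini's theorem below. Note also that the reference measure $\nu$ does not actually appear in the asserted formula, so it plays no role.

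First I would record the pointwise algebraic identity for the smooth positive function $p=\mu_{0|t}(\cdot\mid x_t)$ (smoothness and positivity of the joint and marginal densities, hence of the conditional, follow from Theorem \ref{Thm:Existence and Uniqueness to the Cauchy Problem}):
$$\nabla\log p\,(\nabla\log p)^{\top}=\frac{(\nabla p)(\nabla p)^{\top}}{p^{2}}=\frac{\nabla^{2}p}{p}-\nabla^{2}\log p .$$
Pairing this with the matrix field $\rho_t$ (which depends only on $x_t$) in the Hilbert--Schmidt inner product and multiplying by the joint density $\mu_{0,t}=\mu_t\,\mu_{0|t}=\mu_t\,p$, and then using $\mu_{0,t}/p=\mu_t$ in the middle term, one gets the pointwise relation
$$\mu_{0,t}\left< \rho_t ,\nabla\log\mu_{0|t}(\nabla\log\mu_{0|t})^{\top}\right>_{HS}=\mu_t\left<\rho_t,\nabla^{2}\mu_{0|t}\right>_{HS}-\mu_{0,t}\left<\rho_t,\nabla^{2}\log\mu_{0|t}\right>_{HS}.$$

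Next I would integrate this relation over $\R^N_0\times\R^N_t$. The left-hand side becomes the left-hand side of the lemma, and the last term becomes exactly the right-hand side of the lemma, so it remains only to show that the middle term integrates to zero. Doing the $x_0$-integration first, and using that $\mu_t$ and $\rho_t$ do not depend on $x_0$,
$$\int_{\R^N_t}\!\int_{\R^N_0}\mu_t\left<\rho_t,\nabla^{2}\mu_{0|t}\right>_{HS}dx_0\,dx_t=\int_{\R^N_t}\mu_t\left<\rho_t,\ \nabla^{2}\!\!\int_{\R^N_0}\mu_{0|t}\,dx_0\right>_{HS}dx_t=\int_{\R^N_t}\mu_t\left<\rho_t,\nabla^{2}1\right>_{HS}dx_t=0,$$
since $\int_{\R^N_0}\mu_{0|t}(x_0\mid x_t)\,dx_0=1$ identically in $x_t$. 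This establishes the claimed equality. (Equivalently, one can integrate by parts directly in $x_t$ in the left-hand side: the boundary terms at infinity vanish by rapid decay, and the unwanted terms produced by letting $\partial_{x_t^i}$ hit $\rho_t$ and $\mu_t$ both carry the factor $\int_{\R^N_0}\mu_{0,t}\nabla\log\mu_{0|t}\,dx_0=\mu_t\,\nabla\!\int_{\R^N_0}\mu_{0|t}\,dx_0=0$, which is the same cancellation dressed differently; this is presumably why the lemma is phrased as an ``integration by parts''.)

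There is no genuine obstacle here: the only points needing care are the technical ones of (i) having $\mu_{0|t}$ smooth and strictly positive so that $\log\mu_{0|t}$ and its derivatives are well defined, (ii) enough decay of $\mu_{0,t}$ and of $\nabla_{x_t}\mu_{0|t}$, $\nabla_{x_t}^2\mu_{0|t}$ in both variables to interchange $\nabla_{x_t}^2$ with $\int dx_0$, to apply Fubini, and to discard boundary terms in the integration-by-parts variant. All of these are guaranteed by the smoothness and rapid-decrease assumptions already in force, so I would simply invoke them rather than reprove quantitative bounds; the conceptual content is entirely the identity $\nabla^2\log p=\nabla^2 p/p-\nabla\log p(\nabla\log p)^\top$ together with $\int_{\R^N_0}\mu_{0|t}\,dx_0=1$.
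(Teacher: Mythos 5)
Your proposal is correct and follows essentially the same route as the paper: the pointwise identity $\nabla\log\mu_{0|t}(\nabla\log\mu_{0|t})^{\top}=\nabla^{2}\mu_{0|t}/\mu_{0|t}-\nabla^{2}\log\mu_{0|t}$ (in the $x_t$ variable), followed by pulling $\nabla^{2}$ through the $x_0$-integral and using $\int_{\R^N_0}\mu_{0|t}\,dx_0\equiv 1$ so that the middle term vanishes. The paper's proof is exactly this computation, so no further comparison is needed.
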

\begin{proof}

    \begin{equation*}
        \begin{aligned}
        \int_{\mathbb{R} _{t}^{N}}\mu _t\int_{\mathbb{R} _{0}^{N}}\mu _{0|t}&\left< \rho_t , \nabla \log \mu _{0|t}\left( \nabla \log \mu _{0|t} \right) ^{\top} \right> _{HS}\,\,dx_tdx_0\\
&= 
\int_{\mathbb{R} _{t}^{N}}{\mu _t}\left< \rho_t , \int_{\mathbb{R} _{0}^{N}}{\mu _{0|t}\nabla \log \mu _{0|t}\left( \nabla \log \mu _{0|t} \right) ^{\top}}dx_0 \right> _{HS}dx_t\\
&=
\int_{\mathbb{R} _{t}^{N}}{\mu _t}\left< \rho_t , \int_{\mathbb{R} _{0}^{N}}{\mu _{0|t}\left( \frac{\nabla ^2\mu _{0|t}}{\mu _{0|t}}-\nabla ^2\log \mu _{0|t} \right)}dx_0 \right> _{HS}dx_t
\\
&=
\int_{\mathbb{R} _{t}^{N}}{\mu _t}\left< \rho_t, \nabla ^2\int_{\mathbb{R} _{0}^{N}}{\mu _{0|t}}dx_0 \right> _{HS}dx_t + 
\int_{\mathbb{R} _{t}^{N}}{\mu _t}\int_{\mathbb{R} _{0}^{N}}{\mu _{0|t}}\left< \rho_t , -\nabla ^2\log \mu _{0|t} \right> _{HS}dx_tdx_0\\
&=
0+\int_{\mathbb{R} _{t}^{N}}{\mu _t}\int_{\mathbb{R} _{0}^{N}}{\mu _{0|t}}\left< \rho_t , -\nabla ^2\log \mu _{0|t} \right> _{HS}dx_tdx_0\\
&=
-\int_{\mathbb{R} _{t}^{N}}{\mu _t}\int_{\mathbb{R} _{0}^{N}}{\mu _{0|t}}\left< \rho_t , \nabla ^2\log \mu _{0|t} \right> _{HS}dx_tdx_0.
        \end{aligned}
    \end{equation*}

\end{proof}
By combining Lemma \ref{derivatives of entropy}, \ref{expansion of Kv} and \ref{integration by parts}, we can obtain the first and the second derivative of mutual information over $t$
\newpage
\begin{equation*}
    \begin{aligned}
    \dfrac{d}{dt}&I(\mu_0;\mu_t)=
       \dfrac{d}{dt}H_{\nu}(\mu_{t|0})- \dfrac{d}{dt}H_{\nu}(\mu_t)
        =J_{\nu}(\mu_{t|0})-J_{\nu}(\mu_t)\\
        &\qquad\;\;\;\;\;\,\,  =-J_{\nu}(\mu_0;\mu_t)\leqslant 0\\
\dfrac{d^2}{dt^2}&I\left( \mu _0;\mu _t \right) =\dfrac{d^2}{dt^2}H_{\nu}\left( \mu _t|\mu _0 \right)-\dfrac{d^2}{dt^2}H_{\nu}\left( \mu _t \right)\\
&=
2\Psi \left( \mu _0|\mu _t \right) +4\int_{\mathbb{R} _{t}^{N}}{\mu _t\int_{\mathbb{R} _{0}^{N}}{\mu _{0|t}}\left< -\nabla ^2\log \frac{\mu _t}{\nu}, -\nabla ^2\log \,\mu _{0|t} \right> _{HS}}\,\,dx_tdx_0\\
&\;\;+2\int_{\mathbb{R} _{0}^{N}}{\mu _0dx_0\int_{\mathbb{R} _{t}^{N}}{\mu _{t|0}}}\left< \nabla ^2V,\nabla \log \frac{\mu _{t|0}}{\nu}\left( \nabla \log \frac{\mu _{t|0}}{\nu} \right) ^{\top} \right> _{HS}dx_t\\
&\;\;-2\int_{\mathbb{R} ^N}{\mu _t}\left< \nabla ^2V,\nabla \log \frac{\mu _t}{\nu}\left( \nabla \log \frac{\mu _t}{\nu} \right) ^{\top} \right> _{HS}dx_t\\
&=
2\Psi \left( \mu _0|\mu _t \right) +4\int_{\mathbb{R} _{t}^{N}}{\mu _t\int_{\mathbb{R} _{0}^{N}}{\mu _{0|t}}\left< -\nabla ^2\log \frac{\mu _t}{\nu}, \nabla \log \mu _{0|t}\left( \nabla \log \mu _{0|t} \right) ^{\top} \right> _{HS}}\,\,dx_tdx_0\\
&\;\;
+2\int_{\mathbb{R} _{0}^{N}}{\mu _0dx_0\int_{\mathbb{R} _{t}^{N}}{\mu _{t|0}}}\left< \nabla ^2V,\nabla \log \frac{\mu _{t|0}}{\nu}\left( \nabla \log \frac{\mu _{t|0}}{\nu} \right) ^{\top} \right> _{HS}dx_t\\
&\;\;
-2\int_{\mathbb{R} ^N}{\mu _t}\left< \nabla ^2V,\nabla \log \frac{\mu _t}{\nu}\left( \nabla \log \frac{\mu _t}{\nu} \right) ^{\top} \right> _{HS}dx_t\\
&=
2\Psi \left( \mu _0|\mu _t \right) +4\int_{\mathbb{R} _{t}^{N}}{\mu _t\int_{\mathbb{R} _{0}^{N}}{\mu _{0|t}}\left< -\nabla ^2\log \mu _t, \nabla \log \mu _{0|t}\left( \nabla \log \mu _{0|t} \right) ^{\top} \right> _{HS}}\,\,dx_tdx_0\\
&\;\;
+2\int_{\mathbb{R} _{0}^{N}}{\mu _0dx_0\int_{\mathbb{R} _{t}^{N}}{\mu _{t|0}}}\left< \nabla ^2V,\nabla \log \frac{\mu _{t|0}}{\nu}\left( \nabla \log \frac{\mu _{t|0}}{\nu} \right) ^{\top} \right> _{HS}dx_t\\
&\;\;
-2\int_{\mathbb{R} ^N}{\mu _t}\left< \nabla ^2V, \nabla \log \frac{\mu _t}{\nu}\left( \nabla \log \frac{\mu _t}{\nu} \right) ^{\top} \right> _{HS}dx_t\\
&\;\;
-4\int_{\mathbb{R} _{t}^{N}}{\mu _t\int_{\mathbb{R} _{0}^{N}}{\mu _{0|t}}\left< \nabla ^2V, \nabla \log \mu _{0|t}\left( \nabla \log \mu _{0|t} \right) ^{\top} \right> _{HS}}\,\,dx_tdx_0\\
&=
2\Psi \left( \mu _0|\mu _t \right) +4\int_{\mathbb{R} _{t}^{N}}{\mu _t}\int_{\mathbb{R} _{0}^{N}}{\mu _{0|t}}\left< \nabla ^2\log \mu _t, \nabla ^2\log \mu _{0|t} \right> _{HS}dx_tdx_0\\
&\;\;
+2\int_{\mathbb{R} _{0}^{N}}{\mu _0dx_0\int_{\mathbb{R} _{t}^{N}}{\mu _{t|0}}}\left< \nabla ^2V,\nabla \log \mu _{t|0}\left( \nabla \log \mu _{t|0} \right) ^{\top}+\nabla \log \nu \left( \nabla \log \nu \right) ^{\top}\right.
\left.-2\nabla \log \mu _{t|0}\left( \nabla \log \nu \right) ^{\top} \right> _{HS}dx_t\\
&\;\;
-2\int_{\mathbb{R} ^N}{\mu _t}\left< \nabla ^2V,\nabla \log \mu _t\left( \nabla \log \mu _t \right) ^{\top}+\nabla \log \nu \left( \nabla \log \nu \right) ^{\top}-2\nabla \log \mu _t\left( \nabla \log \nu \right) ^{\top} \right> _{HS}dx_t\\
&\;\;
-4\int_{\mathbb{R} _{t}^{N}}{\mu _t\int_{\mathbb{R} _{0}^{N}}{\mu _{0|t}}\left< \nabla ^2V, \nabla \log \mu _{0|t}\left( \nabla \log \mu _{0|t} \right) ^{\top} \right> _{HS}}\,\,dx_tdx_0\\
&=
2\Psi \left( \mu _0|\mu _t \right) +4\int_{\mathbb{R} _{t}^{N}}{\mu _t}\int_{\mathbb{R} _{0}^{N}}{\mu _{0|t}}\left< \nabla ^2\log \mu _t, \nabla ^2\log \mu _{0|t} \right> _{HS}dx_tdx_0\\
&\;\;-2\int_{\mathbb{R} _{t}^{N}}{\mu _0}\int_{\mathbb{R} _{0}^{N}}{\mu _{t|0}}\left< \nabla ^2V, \nabla ^2\log \mu _{t|0} \right> _{HS}dx_tdx_0
+2\int_{\mathbb{R} _{t}^{N}}{\mu _0}\int_{\mathbb{R} _{0}^{N}}{\mu _{t|0}}\left< \nabla ^2V, \nabla V\left( \nabla V \right) ^{\top} \right> _{HS}dx_tdx_0
\\
&\;\;
+4\int_{\mathbb{R} _{t}^{N}}{\mu _0}\int_{\mathbb{R} _{0}^{N}}{\mu _{t|0}}\left< \nabla ^2V, \nabla \log \mu _{t|0}\left( \nabla V \right) ^{\top} \right> _{HS}dx_tdx_0+2\int_{\mathbb{R} _{t}^{N}}{\mu _t}\left< \nabla ^2V, \nabla ^2\log \mu _t \right> _{HS}dx_t\\
&\;\;
-2\int_{\mathbb{R} _{t}^{N}}{\mu _t}\left< \nabla ^2V, \nabla V\left( \nabla V \right) ^{\top} \right> _{HS}dx_t
-4\int_{\mathbb{R} _{t}^{N}}{\mu _t}\left< \nabla ^2V, \nabla \log \mu _t\left( \nabla V \right) ^{\top} \right> _{HS}dx_t\\
&\;\;+4\int_{\mathbb{R} _{t}^{N}}{\mu _t\int_{\mathbb{R} _{0}^{N}}{\mu _{0|t}}\left< \nabla ^2V, \nabla ^2\log \mu _{0|t} \right> _{HS}}\,\,dx_tdx_0.
    \end{aligned}
\end{equation*}

Further simplify the formula, we obtain
\begin{equation*}
    \begin{aligned}
       \dfrac{d^2}{dt^2} &I(\mu_0;\mu_t)\\
       &=2\Psi(\mu_0|\mu_t)+
4\int_{\mathbb{R} _{0}^{N}\times \mathbb{R} _{t}^{N}}{\mu _{0,t}} \left<
\nabla^2\log\mu_t,\nabla^2\log\mu_{0|t}
\right>_{HS} dx_0dx_t+
4\int_{\mathbb{R} _{0}^{N}\times \mathbb{R} _{t}^{N}}{\mu _{0,t}} \left<
\nabla^2 V,\nabla^2\log\mu_{0|t}(\nabla V)^\top
\right>_{HS} dx_0dx_t\\
&\;\;
-2\int_{\mathbb{R} _{0}^{N}\times \mathbb{R} _{t}^{N}}{\mu _{0,t}} \left<
\nabla^2 V,\nabla^2\log\mu_{0|t}
\right>_{HS} dx_0dx_t+
4\int_{\mathbb{R} _{0}^{N}\times \mathbb{R} _{t}^{N}}{\mu _{0,t}} \left<
\nabla^2 V,\nabla^2\log\mu_{0|t}
\right>_{HS} dx_0dx_t\\
&=
2\Psi \left( \mu _0|\mu _t \right) +2\int_{\mathbb{R} _{0}^{N}\times \mathbb{R} _{t}^{N}}{\mu _{0,t}}\left< 2\nabla ^2\log \mu _t+{\nabla ^2V}, \nabla ^2\log \mu _{0|t} \right> _{HS}dx_0dx_t\\
&\;\;
+4\int_{\mathbb{R} _{t}^{N}}{\mu _t}\left< \nabla ^2V,\left( \nabla V \right) ^{\top}\int_{\mathbb{R} _{0}^{N}}{\mu _{0|t}}\nabla \log \mu _{0|t}dx_0 \right> _{HS}dx_t
    \end{aligned}
\end{equation*}
where
\begin{equation*}
    \begin{aligned}
\int_{\mathbb{R} _{t}^{N}}{\mu _t}&\left< \nabla ^2V,\left( \nabla V \right) ^{\top}\int_{\mathbb{R} _{0}^{N}}{\mu _{0|t}}\nabla \log \mu _{0|t}dx_0 \right> _{HS}dx_t\\
&\qquad\qquad=
\int_{\mathbb{R} _{t}^{N}}{\mu _t}\left< \nabla ^2V,\left( \nabla V \right) ^{\top}\int_{\mathbb{R} _{0}^{N}}{\nabla \mu _{0|t}dx_0} \right> _{HS}dx_t\\
&\qquad\qquad=\int_{\mathbb{R} _{t}^{N}}{\mu _t}\left< \nabla ^2V,\left( \nabla V \right) ^{\top}\nabla \int_{\mathbb{R} _{0}^{N}}{\mu _{0|t}dx_0} \right> _{HS}dx_t\\
&\qquad\qquad=0.
   \end{aligned}
\end{equation*}
Finally, we obtain the second derivative of mutual information over $t$
\begin{equation*}
    \begin{aligned}
        \dfrac{d^2}{dt^2}I(\mu_0;\mu_t)
        &=2\Psi \left( \mu _0|\mu _t \right) +2\int_{\mathbb{R} _{0}^{N}\times \mathbb{R} _{t}^{N}}{\mu _{0,t}}\left< 2\nabla ^2\log \mu _t+{\nabla ^2V}, \nabla ^2\log \mu _{0|t} \right> _{HS}dx_0dx_t\\
        &=2\Psi \left( \mu _0|\mu _t \right) +4\int_{\mathbb{R} _{0}^{N}\times \mathbb{R} _{t}^{N}}{\mu _{0,t}}\left< \nabla^2\log\gamma_t, \nabla ^2\log \mu _{0|t} \right> _{HS}dx_0dx_t\\
        &=2\Psi \left( \mu _0|\mu _t \right) -4\int_{\mathbb{R} _{0}^{N}\times \mathbb{R} _{t}^{N}}{\mu _{0,t}} 
        \left(\nabla\log\mu_{0|t}\right)^\top
        \left(\nabla^2\log\gamma_t\right)
        \left(\nabla\log\mu_{0|t}\right)
        dx_0dx_t.
    \end{aligned}
\end{equation*}

Here $\gamma_t := e^{\frac{V}{2}} \mu_t$. Therefore, we consider proving that $\gamma_t$ is log-concave for every $t\geqslant 0$. Consequently, the mutual information $I(\mu_0; \mu_t)$ will be shown to be convex over $t$.

\section{Proof of Lemma \ref{step1}}
\label{appendix: proof of step1}
 Since $\gamma_t = e^{\frac{V}{2}}\mu_t$ satisfies the Schr\"odinger equation:
 \begin{equation*}
     \begin{aligned}
         \dfrac{\partial\gamma_t}{\partial t}+\left(\dfrac{1}{4}\norm{\nabla V}^2-\dfrac{1}{2}\Delta V\right)\gamma_t-\Delta \gamma_t=0
     \end{aligned}
 \end{equation*}
where $c(x):=\dfrac{1}{4}\norm{\nabla V}^2-\dfrac{1}{2}\Delta V\geqslant 0$. By Corollary \ref{Schr\"odinger equation_log concave}, if $\gamma_0$ is log-concave in $\Omega$ with zero boundary value, that is, $X_0\sim\mu_0$ is $\frac{V}{2}$-relatively log-concave in $\Omega$, then $\gamma_t$ is perpetually log-concave in $\Omega$ for every $t>0$. Hence, the mutual information is convex over $t$.
\section{Proof of Lemma \ref{step2}}
\label{appendix: proof of step2}
The proof of Lemma \ref{step2} is based on Lemma \ref{log-concavity of heat flow}$-$\ref{find a log-concave convergent sequence to u_0}.
\begin{lemma}\label{log-concavity of heat flow}
    Let $u$ be a bounded nonnegative solution of
    \begin{equation}\label{Heat equation}
        \begin{aligned}
        \begin{cases}
        \partial_t u=\Delta u, &\text{in $ (0,\infty)\times\R^N$}\\
        u(0,x)=u_0(x), &\text{in $\R^N$}
    \end{cases}
        \end{aligned}
    \end{equation}
where $u_0$ is a bounded nonnegative function in $\R^N$. Then $u(t,\cdot)$ is log-concave in $\R^N$ for any $t>0$ if $u_0$ is log-concave.
\end{lemma}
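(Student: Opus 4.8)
The plan is to reduce the statement to the classical theorem of Brascamp and Lieb that Gaussian convolution preserves log-concavity, which is itself a consequence of the Pr\'ekopa--Leindler inequality (equivalently, of the convolution-preserves-log-concavity property recorded in Section \ref{Preservation of log-concavity}).

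First I would identify $u$ with the Gauss--Weierstrass integral. Since $u$ is bounded and $u_0$ is bounded, Tychonoff's uniqueness theorem for the Cauchy problem of the heat equation forces
\begin{equation*}
    u(t,x)=\int_{\R^N} G_t(x-y)\,u_0(y)\,dy,\qquad G_t(z):=(4\pi t)^{-N/2}\exp\!\left(-\frac{\norm{z}^2}{4t}\right),
\end{equation*}
for every $t>0$, which is exactly the representation derived in \eqref{solve heat flow}; the integral converges because $u_0$ is bounded and $G_t\in L^1(\R^N)$. This is essentially the only place boundedness of $u$ is genuinely used, since without a growth restriction the heat equation admits nonzero solutions with vanishing initial data.

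Next I would check that the integrand is jointly log-concave on $\R^N\times\R^N$. On the convex set $\R^N\times\{u_0>0\}$ one has
\begin{equation*}
    \log\!\big(G_t(x-y)\,u_0(y)\big)=-\frac{\norm{x-y}^2}{4t}-\frac{N}{2}\log(4\pi t)+\log u_0(y),
\end{equation*}
which is a sum of the concave quadratic $(x,y)\mapsto-\norm{x-y}^2/(4t)$ (a concave function precomposed with the linear map $(x,y)\mapsto x-y$), a constant, and the concave function $\log u_0(y)$; hence the integrand is log-concave, vanishing off $\R^N\times\{u_0>0\}$. Integrating out the variable $y$ and invoking Pr\'ekopa's theorem on the log-concavity of marginals (equivalently, that the convolution $u_0\ast G_t$ of two log-concave functions is log-concave) then shows that $x\mapsto u(t,x)$ is log-concave for each fixed $t>0$, as claimed.

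The argument is essentially routine and I do not expect a serious obstacle; the two points that need a little care are the identification of $u$ with the explicit convolution (handled by uniqueness of bounded solutions) and the fact that $\log u_0$ may take the value $-\infty$, which is absorbed into the standard statement of Pr\'ekopa--Leindler. If one wishes to avoid quoting Pr\'ekopa's theorem in $\R^N$, one can instead factor $G_t$ into one-dimensional Gaussians and iterate the one-dimensional case, or establish the midpoint inequality $u(t,\frac{x+x'}{2})\geqslant u(t,x)^{1/2}u(t,x')^{1/2}$ directly from the two-function Pr\'ekopa--Leindler inequality via the substitution $y=\frac{y'+y''}{2}$ inside the convolution integral.
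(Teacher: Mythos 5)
Your proposal is correct and follows essentially the same route as the paper: represent $u(t,\cdot)$ as the convolution of the log-concave Gaussian kernel with $u_0$ and invoke the preservation of log-concavity under convolution (Pr\'ekopa--Leindler). The only addition is your explicit appeal to Tychonoff's uniqueness theorem to justify the convolution representation for bounded solutions, a detail the paper leaves implicit but which strengthens rather than changes the argument.
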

\begin{proof}
   This lemma directly follows as a corollary of the convolution preservation property of log-concavity, since the solution \( u(t,x) \) to the equation \eqref{Heat equation} can be expressed as the convolution of the heat kernel with \( u_0 \) and the heat kernel is log-concave.
\end{proof}
\begin{lemma}\label{log-concavity of a harmonic equation}
(Theorem 4.1 in \cite{kennington1985power})Let $\eta$ solve the harmonic equation:
\begin{equation}
    \begin{aligned}
        \begin{cases}
            -\Delta\eta=1, &\text{in $\Omega$}\\
            \eta>0, &\text{in $\Omega$}\\
            \eta=0, &\text{on $\partial\Omega$}
        \end{cases}
    \end{aligned}
\end{equation}
Then $\eta$ is $1/2$-concave in $\Omega$, which implies that $\log\eta$ is concave in $\Omega$, i.e. $\eta$ is log-concave in $\Omega$. 
\end{lemma}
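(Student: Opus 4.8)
Since the statement is quoted as Theorem~4.1 of \cite{kennington1985power}, the shortest option is to cite it; if a self-contained argument were wanted, the plan would be to reduce the claim to the concavity of $w:=\eta^{1/2}$ and then run a concavity maximum principle of Korevaar--Kennington type. (Here $\Omega$ is convex, as everywhere in the paper; $1/2$-concavity fails on nonconvex domains.) First I would record the equation for $w$: on $\Omega=\{\eta>0\}$ put $w=\sqrt{\eta}>0$; from $\eta=w^2$ one gets $\nabla\eta=2w\nabla w$ and $\Delta\eta=2\norm{\nabla w}^2+2w\Delta w$, so $-\Delta\eta=1$ becomes
\[
\Delta w=-\frac{1+2\norm{\nabla w}^2}{2w}=:\psi(w,\nabla w)\qquad\text{in }\Omega.
\]
The structural facts driving the method are that $w\mapsto\psi(w,p)$ is increasing for each fixed $p$ (the equation is of cooperative type) and that $\psi$ is jointly concave in $(w,p)$; these are exactly the sign conditions under which a positive solution vanishing on the boundary of a convex domain is concave. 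I would also note that $w\in C^\infty(\Omega)$ by interior elliptic regularity, that $w$ extends continuously to $\overline{\Omega}$ with $w=0$ on $\partial\Omega$, and that (Hopf's lemma) $\eta$ vanishes linearly at $\partial\Omega$, so $w$ vanishes like $d(x)^{1/2}$ with $d(x):=\operatorname{dist}(x,\partial\Omega)$ and $\norm{\nabla w}\to\infty$ at the boundary, $\nabla w$ pointing into $\Omega$.

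Next I would introduce the concavity function $C(x,y):=w\!\big(\tfrac{x+y}{2}\big)-\tfrac12 w(x)-\tfrac12 w(y)$ on $\overline{\Omega}\times\overline{\Omega}$ and prove $C\ge0$ by contradiction, since for continuous $w$ this is equivalent to concavity. Assume $C$ attains a negative minimum at $(x_0,y_0)$. Both points cannot lie on $\partial\Omega$ (then $C(x_0,y_0)=w(\tfrac{x_0+y_0}{2})\ge0$); and if exactly one does, say $x_0\in\partial\Omega$, then moving $x_0$ a small distance $s$ into $\Omega$ changes $w$ at that point by $\sim c\sqrt{s}>0$ while the remaining terms vary smoothly, so $C$ strictly decreases --- contradicting minimality. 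Hence $(x_0,y_0)$ is interior, the first-order conditions force $\nabla w(x_0)=\nabla w(y_0)=\nabla w(z_0)=:p$ with $z_0:=\tfrac{x_0+y_0}{2}$, and nonnegativity of the Hessian of $C$ gives the matrix inequality $\nabla^2 w(z_0)\succeq\tfrac12\nabla^2 w(x_0)+\tfrac12\nabla^2 w(y_0)$. Tracing this and substituting the equation with the common gradient $p$ yields $1/w(z_0)\le\tfrac12\big(1/w(x_0)+1/w(y_0)\big)$, which already contradicts $C(x_0,y_0)<0$ (i.e.\ $w(z_0)<\tfrac12(w(x_0)+w(y_0))$) when $w(x_0)=w(y_0)$, by the arithmetic--harmonic mean comparison.

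The hard part is closing the interior case in general, when $w(x_0)\neq w(y_0)$: there one must play the structural inequalities coming from the equation against the second-order condition along a carefully chosen direction (related to $x_0-y_0$) rather than just its trace, and the remaining degenerate configurations are handled by regularizing $-\Delta\eta=1$ so that these inequalities become strict (for instance solving $-\Delta\eta_\varepsilon=1+2\varepsilon\sqrt{\eta_\varepsilon}$, running the argument, and letting $\varepsilon\to0$ with interior estimates). This is precisely the delicate bookkeeping performed in \cite{kennington1985power}, which is why the lemma simply quotes it. Granting $C\ge0$, i.e.\ $w=\eta^{1/2}$ concave, the final implication is immediate: $\log w$ is concave, being the composition of the increasing concave function $\log$ with the concave $w$, so $\log\eta=2\log w$ is concave, i.e.\ $\eta$ is log-concave.
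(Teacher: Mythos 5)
Your proposal is correct and takes essentially the same route as the paper: the $1/2$-concavity of $\eta$ is simply Kennington's Theorem~4.1, which the paper (like you, after your sketch of the Korevaar--Kennington concavity-function argument) ultimately just cites, honestly flagging that the case $w(x_0)\neq w(y_0)$ is where the cited bookkeeping is needed. The only cosmetic difference is the final implication: you compose the increasing concave $\log$ with the concave $w=\sqrt{\eta}$ to get concavity of $\log\eta=2\log w$, whereas the paper invokes monotonicity of the $p$-means ($M_{1/2}\geqslant M_{0}$); these are the same elementary fact.
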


Here we say a nonnegative function $u$ in $\R^N$ is $p$-concave if 
$$u((1-s)x+sy)\geqslant M_{p}(u(x),u(y);s)$$
for $s\in [0,1]$ and $x,y\in\R^N$, where
\begin{equation*}
    \begin{aligned}
        M_{p}(a,b;\lambda)=
        \begin{cases}
            \max\{a,b\}, &p=+\infty\\
            [(1-\lambda)a^p+\lambda b^p]^{1/p}, &\text{for }p\neq -\infty,0,+\infty\\
            a^{1-\lambda}b^{\lambda}, &p=0\\
            \min\{a,b\}, &p=-\infty
        \end{cases}
    \end{aligned}
\end{equation*}
is the ($\lambda$-weighted) $p$-mean of $a$ and $b$. A simple consequence of Jensen’s inequality is that   
$$M_{p}(a,b;\lambda)\leqslant M_{q}(a,b;\lambda) \text{ if }p\leqslant q.$$
Hence $q$-concave can deduce $p$-concave for any $q\geqslant p$, and we can notice that $0$-concave is equivalent to log-concave. 

\begin{lemma}\label{find a log-concave convergent sequence to u_0}
Let $\Omega$ be a bounded smooth convex domain, $u_0$ be a bounded
nonnegative log-concave function in $\Omega$. Then there exists a sequence of log-concave functions $\{u_{0,n}\}_{n\geqslant 1}$ converging to $u_0$ almost everywhere, such that $u_{0,n}$ is nonnegative and continuous on $\Bar{\Omega}$ with zero boundary value, i.e. $\left.u_{0,n}\right|_{\partial\Omega}=0$, for any $n$.
    
\end{lemma}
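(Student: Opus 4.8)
Lemma~\ref{find a log-concave convergent sequence to u_0} says we can approximate a bounded nonnegative log-concave $u_0$ on $\Omega$, a.e., by log-concave functions that are continuous on $\overline{\Omega}$ and vanish on $\partial\Omega$. The plan is to first regularize $u_0$ by a short run of the heat flow to remove interior discontinuities, and then multiply by a log-concave ``cut-off'' that forces the boundary value to vanish, checking that log-concavity survives both steps.

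First I would extend $u_0$ to all of $\R^N$ by setting $\tilde u_0:=0$ outside $\Omega$. Since $\Omega$ is convex this extension is again bounded, nonnegative and log-concave on $\R^N$: if $\tilde u_0(x)\tilde u_0(y)>0$ then $x,y\in\Omega$, hence $(1-\lambda)x+\lambda y\in\Omega$ by convexity, and the log-concavity inequality for $u_0$ on $\Omega$ applies verbatim. Next let $v_n$ be the solution at time $1/n$ of the heat equation \eqref{Heat equation} on $\R^N$ with initial datum $\tilde u_0$. By Lemma~\ref{log-concavity of heat flow} each $v_n$ is log-concave on $\R^N$; moreover $v_n\in C^\infty$, $v_n>0$ everywhere, $\|v_n\|_\infty\le\|u_0\|_{L^\infty(\Omega)}$, and $v_n\to\tilde u_0$ almost everywhere as $n\to\infty$ (pointwise convergence of the heat semigroup at $t\to0^+$ holds at every continuity point of $\tilde u_0$, and $\tilde u_0$ is continuous off the Lebesgue-null set $\partial\Omega\cup\partial\{u_0>0\}$, being finite and convex on the interior of its support and identically $0$ outside $\overline{\{u_0>0\}}\cup\overline{\Omega}$).

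Then I would invoke the function $\eta$ of Lemma~\ref{log-concavity of a harmonic equation}: $\eta\in C^\infty(\Omega)\cap C(\overline{\Omega})$, $-\Delta\eta=1$ in $\Omega$, $\eta>0$ in $\Omega$, $\eta=0$ on $\partial\Omega$, and $\eta$ is log-concave in $\Omega$. Define
$$u_{0,n}:=v_n\cdot\eta^{1/n}\quad\text{on }\overline{\Omega}.$$
This is nonnegative, continuous on $\overline{\Omega}$ (a product of a function continuous on $\R^N$ with one continuous on $\overline{\Omega}$, since $t\mapsto t^{1/n}$ is continuous on $[0,\infty)$), and it vanishes on $\partial\Omega$ because $\eta$ does. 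It is log-concave: its positivity set is $\{v_n>0\}\cap\{\eta>0\}=\Omega$, which is convex, and $\log u_{0,n}=\log v_n+\tfrac{1}{n}\log\eta$ is a sum of concave functions on $\Omega$. Finally, for a.e.\ $x\in\Omega$ we have $v_n(x)\to u_0(x)$, while $\eta(x)>0$ forces $\eta(x)^{1/n}\to1$; hence $u_{0,n}\to u_0$ a.e.\ on $\Omega$, and the sequence is moreover uniformly bounded. (If $u_0$ vanishes a.e., i.e.\ $\{u_0>0\}$ has empty interior, the construction degenerates to $u_{0,n}\equiv0$, which is still consistent.)

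The construction is essentially forced once Lemmas~\ref{log-concavity of heat flow} and \ref{log-concavity of a harmonic equation} are available, and the only genuinely delicate point is the a.e.\ convergence $v_n\to\tilde u_0$: it relies on the fact that a log-concave function is continuous on the interior of its support (it is locally a finite convex function there) together with the classical fact that the boundary of a convex set in $\R^N$ is Lebesgue-null, so that the exceptional set where the heat regularization might fail to converge pointwise is negligible. Everything else is routine stability bookkeeping for log-concavity under extension by zero across a convex set, under convolution with the log-concave heat kernel, and under products.
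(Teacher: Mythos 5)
Your construction is correct and is essentially the paper's own proof: extend $u_0$ by zero to $\R^N$, regularize by the heat flow (Lemma \ref{log-concavity of heat flow}), and multiply by a small power of the log-concave torsion function $\eta$ from Lemma \ref{log-concavity of a harmonic equation} to force the zero boundary value while preserving log-concavity. The only differences are cosmetic: the paper obtains the times $t_n$ from $L^1$-convergence of the heat semigroup plus an a.e.-convergent subsequence rather than your pointwise-convergence-at-continuity-points argument, and it chooses the exponents $m_n^{-1}$ on $\eta$ adaptively so that $\sup_\Omega u_{0,n}$ stays below $\mathop{\mathrm{ess}\,\mathrm{sup}}\log$-level of $u_0$, a bound your fixed exponent $1/n$ does not give but which the lemma statement does not require.
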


\begin{proof}
   Consider the heat function:
   \begin{equation*}
       \begin{aligned}
           \begin{cases}
        \partial_t v=\Delta v, &\text{in $ (0,\infty)\times\R^N$}\\
        v(0,x)=v_0(x), &\text{in $\R^N$}
    \end{cases}
       \end{aligned}
   \end{equation*}
where
\begin{equation*}
    \begin{aligned}
        v_0(x)=\begin{cases}
            u_0(x), &\text{for $x\in\Omega$}\\
            0, &\text{for $x\notin\Omega$}
        \end{cases}.
    \end{aligned}
\end{equation*}
Then the solution $v(t,x)=\left[e^{t\Delta}v_0\right](x)$ for $x\in\R^N$ and $t>0$. Here $e^{t\Delta}$ is the heat semigroup and we let $e^{-\infty}:=0$. Then $v_0$ is log-concave in $\R^N$. We deduce from Lemma \ref{log-concavity of heat flow} that $v(t,\cdot)$ is log-concave for any $t>0$. Furthermore, by the maximum principle, we can deduce from $v_0\in L^1(\R^N)\cap L^{\infty}(\R^N)$ that 
\begin{align}
    &\text{$v(t,\cdot)$ is a positive continuous function in $\R^N$ for any $t>0$,} \label{v(t,x) is positive continuous}\\
        &\text{$\norm{v(t,\cdot)}_{L^{\infty}(\R^N)}<$
     $\norm{v_0}_{L^{\infty}(\R^N)}$ for any $t>0$,}\label{maximum principle}\\
     &\text{$\lim\limits_{t\to 0^+}\norm{v(t,\cdot)-v_0}_{L^1(\R^N)}=0$}.
     \label{L1 convergence in heat equations}
\end{align}
By \eqref{L1 convergence in heat equations}, we can find a sequence $\{t_n\}\subset (0,\infty)$ with $\lim\limits_{n\to\infty}t_n=0$ such that
\begin{equation}\label{a.e.convegence in solution to heat equations}
\lim\limits_{n\to\infty} v(t_n,x)=v_0(x)    
\end{equation}
for almost all $x\in\R^N$.

Let $\eta$ solve this harmonic equation:
\begin{equation*}
    \begin{aligned}
         \begin{cases}
            -\Delta\eta=1, &\text{in $\Omega$}\\
            \eta>0, &\text{in $\Omega$}\\
            \eta=0, &\text{on $\partial\Omega$}
        \end{cases}.
    \end{aligned}
\end{equation*}
Then by Lemma \ref{log-concavity of a harmonic equation}, we obtain that $\eta$ is log-concave and $\log\eta\to-\infty$ as $\text{dist}(x,\partial\Omega)\to 0$. By \eqref{v(t,x) is positive continuous} and \eqref{maximum principle} we can find a sequence $\{m_n\}\subset (1,+\infty)$ with $\lim\limits_{n\to\infty}m_n=+\infty$ such that 
$$V_n(x):=\log v(t_n,x)+m_n^{-1}\log \eta(x)$$
is continuous and concave in $\Omega$ and 
$$\sup\limits_{x\in\Omega}V_n(x)\leqslant 
\mathop {\mathrm{ess}\; \mathrm{sup}} \limits_{x\in\Omega}\log v_0$$
as long as we take
$$m_n \geqslant \left|\dfrac{\log\eta(x)}{\mathop {\mathrm{ess}\; \mathrm{sup}} \limits_{x\in\Omega}\{\log v_0(x)-
\log v(t_n,x)\}}\right|+1.$$
Furthermore, by \eqref{a.e.convegence in solution to heat equations} we have 
\begin{align}
    & \lim\limits_{n\to\infty} V_n(x)=\log v_0(x)= \log u_0(x) \text{ for almost all $x\in\Omega$,}\\
    &V_n(x)\to -\infty\text{ as $\text{dist}(x,\partial\Omega)\to 0$.}
\end{align}
Then the function $u_{0,n}(x):=e^{V_n(x)}$ satisfies
\begin{align}
    & \lim\limits_{n\to\infty} u_{0,n}(x)=u_0(x) \text{ for almost all $x\in\Omega$,}\\
    & u_{0,n}(x)\text{ is log-concave in }\Omega \text{ and continuous on }\Bar{\Omega}
\text{ for }\forall n\\
    & u_{0,n}\geqslant 0 \text{ in }\Omega
    \text{ and } u_{0,n}=0\text{ on }\partial\Omega.
\end{align}
\end{proof}

Based on lemmas above, we then prove Lemma \ref{step2}. Consider the Schr\"odinger equation:
 \begin{equation*}
        \begin{cases}
            \partial_t \gamma + c(x)\gamma-\Delta \gamma = 0 & in \;\; (0,\infty)\times\Omega\\
            \gamma(0,x)=\gamma_0(x) & in \;\; \overline{\Omega}
        \end{cases}
    \end{equation*}
and we do not assume that $\gamma_0=0$ on $\partial\Omega$. By Lemma \ref{find a log-concave convergent sequence to u_0}, we can find a sequence of log-concave functions $\{\gamma_{0,n}\}$ with zero boundary value converge to $\gamma_0$ for almost all $x\in\Omega$. Let
$$\gamma_n(t,x):=\int_{\Omega}K_{\Omega}(x,y,t)\gamma_{0,n}(y)dy$$
where $K_{\Omega}(x,y,t)$ is the kernel function of this Schr\"odinger equation according to \cite{avramidi2015heat}. Then we apply the Lebesgue dominated convergence theorem to obtain 
\begin{equation*}
    \begin{aligned}
        \lim\limits_{n\to\infty} \gamma_n(t,x)=\int_{\Omega} K_{\Omega}(x,y,t)\gamma_0(y)dy=\gamma(t,x),\;\;x\in\Omega,t>0.
    \end{aligned}
\end{equation*}
On the other hand, by Lemma \ref{step1} we see that $\gamma_n(t,\cdot)$ is log-concave in $\Omega$ for any $t>0$. Then we prove Lemma \ref{step2} by the preservation property of log-concavity under convergence in distribution.

\section{Proof of Theorem \ref{step3}}
\label{appendix: proof of step3}
For any smooth convex domain $\Omega$, there exist a sequence of bounded smooth convex domains $\{\Omega_n\}_{n\geqslant 1}$ such that
$$\Omega_1\subset \Omega_2\subset\cdots \subset \Omega_n\subset \cdots,\;\; \bigcup_{n=1}^{\infty} \Omega_n = \Omega$$
according to \textit{Theorem 2.7.1} in \cite{schneider2013convex}. If $\Omega=\R^N$, we only need to take $\Omega_n=B_n(0)$, where $B_n(0)$ is a ball with a radius of $n$ centered on the origin.

Consider a sequence of equations on $\Omega_n$:
\begin{equation}\label{eq:Schrodinger eq in Omega_n}
    \begin{aligned}
        \begin{cases}
             \partial_t \gamma_n + c(x)\gamma_n-\Delta \gamma_n = 0 & in \;\; (0,\infty)\times\Omega_n\\
        \gamma_n(t,x)=\gamma(t,x)\raisebox{0.5ex}{$\chi$}_{\overline{\Omega}_n}(x)
            & in \;\; [0,\infty)\times \overline{\Omega}_n
        \end{cases}
    \end{aligned}
\end{equation}
where $\gamma(t,x)$ is the solution to the Schr\"odinger equation in the entire space $\R^N$:
 \begin{equation*}
        \begin{cases}
            \partial_t \gamma + c(x)\gamma-\Delta \gamma = 0 & in \;\; (0,\infty)\times\R^N\\
            \gamma(0,x)=\gamma_0(x) & in \;\; \R^N
        \end{cases}.
    \end{equation*}
Then by the existence and uniqueness of solutions, $\gamma(t,x)\raisebox{0.5ex}{$\chi$}_{\overline{\Omega}_n}(x)$ is the unique solution to \eqref{eq:Schrodinger eq in Omega_n}.

By Lemma \ref{step2}, we know that $\gamma_n(t,\cdot)$ is log-concave in $\Omega_n$ for any $t>0$, and furthermore, $\gamma_n$ is $L^1$ convergent to $\gamma$, which is sufficient to prove that 
\begin{equation}
    \begin{aligned}
        \lim\limits_{n\to\infty}\int_{\R^N\backslash \Omega_n} \gamma(t,x) dx =0
    \end{aligned}
\end{equation}
for any $t>0, x\in \R^N$. By Lemma \ref{stochastic analysis_FP flow}, we have
\begin{equation*}
    \begin{aligned}
         \gamma(t,x)&=\E\left[
\gamma_0(2W_{t/2})e^{-2\int_{0}^{t/2}c(2W_r)dr}
        \right]\left(\frac{x}{2}\right)\\
        &\leqslant \E\left[
        \gamma_0(2W_{t/2})
        \right]\left(\frac{x}{2}\right)\\
        &=
        \int_{\R^N} \gamma_0(2y) (\pi t)^{-\frac{N}{2}}e^{-\frac{\norm{y-x/2}^2}{t}} dy\\
        &= \int_{\R^N} \gamma_0(y) (4\pi t)^{-\frac{N}{2}}e^{-\frac{\norm{y-x}^2}{4t}} dy.
    \end{aligned}
\end{equation*}
Since $\gamma_0$ is a continuous probability measure in $\R^N$, we know that it is bounded and 
$\displaystyle \lim\limits_{n\to\infty}\int_{\R^N\backslash \Omega_n} \gamma_0(x) dx =0$. Then
\begin{equation*}
    \begin{aligned}
        \int_{\R^N\backslash \Omega_n} \gamma(t,x) dx
        &=\pi^{-\frac{N}{2}}\int_{\R^N\backslash \Omega_n} dx
        \int_{\R^N} \gamma_0(x+2\sqrt{t}\eta) e^{-\eta^2} d\eta\\
        &=\pi^{-\frac{N}{2}}
        \int_{\R^N}  e^{-\eta^2} d\eta
        \int_{\R^N\backslash \Omega_n}
        \gamma_0(x+2\sqrt{t}\eta)
        dx\\
\varlimsup\limits_{n\to\infty}\int_{\R^N\backslash \Omega_n} \gamma(t,x) dx 
&\leqslant 
\pi^{-\frac{N}{2}}
        \int_{\R^N}  e^{-\eta^2} d\eta
\varlimsup\limits_{n\to\infty}\int_{\R^N\backslash \Omega_n}
        \gamma_0(x+2\sqrt{t}\eta) dx
        =0.
    \end{aligned}
\end{equation*}
Hence there exists a subsequence $\{\gamma_{n_k}\}$ such that $\gamma_{n_k}\to \gamma, \text{as } n_k\to\infty$ almost everywhere for any $t>0$, since  $L^1$-convergence sequence includes a subsequence which converges almost everywhere. 

 Furthermore, by the comparison principle, we obtain that
\begin{align}
    &\gamma_{n_k}(t,x)\leqslant \gamma_{n_k+1}(t,x)\leqslant \gamma(t,x),\text{ in }\R^N\times (0,\infty),\\
    & \lim\limits_{{n_k}\to\infty}\gamma_{n_k}(t,x)=\gamma(t,x),\text{ in }\R^N\times (0,\infty).
\end{align}
Then we prove that $\gamma(t,\cdot)$ is log-concave in $\Omega$ for any $t>0$. Thus Theorem $\ref{step3}$ is proved.

\end{document}